\theoremstyle{definition}
\newtheorem{theorem}{Theorem} 
\newtheorem{proposition}[theorem]{Proposition} 
\newtheorem{lemma}[theorem]{Lemma}
\newtheorem{example}{Example}
\DeclarePairedDelimiterX{\inp}[2]{\langle}{\rangle}{#1, #2}
\NewDocumentCommand\LH{mo}{%
  \IfNoValueTF{#2}
   {\mathcal{L}(\mathcal{H}^{#1})}
   {\mathcal{L}(\mathcal{H}^{#1},\mathcal{H}^{#2})}%
}
\newcommand\id{\leavevmode\hbox{\small1\kern-3.3pt\normalsize1}}
\newcommand{\sV}{\mathbb{V}}
\DeclareMathOperator\Log{Log}
\title{Complex, Lorentzian, and Euclidean simplicial quantum gravity: numerical methods and physical prospects}
\author{Ding Jia (贾丁)\thanks{djia@perimeterinstitute.ca}}
\affil{Perimeter Institute for Theoretical Physics, Waterloo, Ontario, N2L 2Y5, Canada}
\affil{Department of Physics and Astronomy, University of Waterloo, Waterloo, Ontario, N2L 3G1, Canada}
\date{}
\begin{document}

\begin{CJK*}{UTF8}{gbsn}
\maketitle
\end{CJK*}

\begin{abstract}
Evaluating gravitational path integrals in the Lorentzian has been a long-standing challenge due to the numerical sign problem. We show that this challenge can be overcome in simplicial quantum gravity. By deforming the integration contour into the complex, the sign fluctuations can be suppressed, for instance using the holomorphic gradient flow algorithm. Working through simple models, we show that this algorithm enables efficient Monte Carlo simulations for Lorentzian simplicial quantum gravity. 

In order to allow complex deformations of the integration contour, we provide a manifestly holomorphic formula for Lorentzian simplicial gravity. This leads to a complex version of simplicial gravity that generalizes the Euclidean and Lorentzian cases. Outside the context of numerical computation, complex simplicial gravity is also relevant to studies of singularity resolving processes with complex semi-classical solutions. Along the way, we prove a complex version of the Gauss-Bonnet theorem, which may be of independent interest.
\end{abstract}

\section{Introduction}

To define a path integral, one needs to specify a way to enumerate the configurations to be summed over. For a non-relativistic particle, it is common is to introduce a lattice of discrete time steps, sum over piecewise linear paths across these steps, and take the continuum limit of lattice spaces going to zero \cite{Feynman1965QuantumIntegrals}. 

For gravity, one could similarly introduce a simplicial lattice, sum over piecewise flat geometries on the lattice characterized by the edge lengths, and take the limit of lattice refinement (\cref{fig:sll}). Historically, this method follows from Regge's insight \cite{Regge1961GeneralCoordinates} to use piecewise flat geometries to approximate curved space(times) at the classiccal level. Regge's classical approach is usually referred to as Regge calculus, or simplicial gravity, while the quantum path integral based on it is usually referred to as quantum Regge calculus, or simplicial quantum gravity \cite{Rocek1981QuantumCalculus, Williams1992ReggeBibliography, Loll1998DiscreteDimensions, Hamber2009QuantumApproach, Barrett2019TullioGravity}.

\begin{figure}
    \centering
    \includegraphics[width=1.\textwidth]{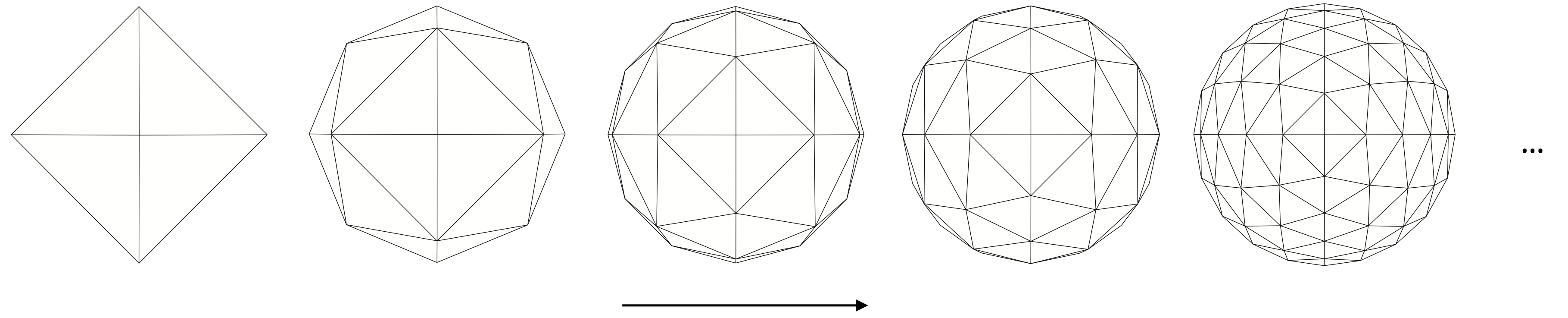}
    \caption{Simplicial lattice refinement.}
    \label{fig:sll}
\end{figure}

As a non-perturbative path integral approach, simplicial quantum gravity has a clear merit. It is known how to couple to the matter species of the Standard Model (see e.g., Chapter 6 of Hamber's textbook \cite{Hamber2009QuantumApproach} and references therein).

On the other hand, Euclidean quantum gravity faces the conformal instability problem \cite{Gibbons1977TheThermodynamics}. This is manifested as the problem of the spikes for Euclidean simplicial quantum gravity. In concrete $2D$ models, it is shown that configurations with diverging edge lengths dominate the path integral, even when the total spacetime area is bounded \cite{Ambjrn1997SpikesCalculusb}. One view is that only the weak coupling phase is rendered ill by the spiky configurations, but the strong coupling phase stays healthy \cite{Hamber2019VacuumGravity}. A more pessimistic view is that conformal instability poses a lethal threat to Euclidean simplicial quantum gravity. 

Whatever conformal instability actually implies about Euclidean quantum gravity, the case is different for the Lorentzian. For $2D$ simplicial quantum gravity it can be shown that the Lorentzian and Euclidean theories are inequivalent, and that spikes are absent in the Lorentzian where spacetime configurations are equipped with causal structures \cite{Tate2011Fixed-topologyDomain, Jia2022Time-spaceGravity}. \footnote{The proof of the absence of spikes in \cite{Tate2011Fixed-topologyDomain} assumes that the causal signature of simplicial lattice edges are fixed under the path integral. In \cite{Jia2022Time-spaceGravity} this assumption is dropped. It is shown that spikes are still absent, provided that causally irregular points with no lightcones attached are prohibited.} The question about higher dimensions is open, but the prospect that spikes are absent in the Lorentzian in general, and the fact that spacetime is Lorentzian in Nature form motivations to study Lorentzian simplicial quantum gravity. 

Apart from a few works \cite{Tate2011Fixed-topologyDomain, Tate2012Realizability1-simplex, MikovicPiecewiseGravity, Asante2021EffectiveGravity, Dittrich2022LorentzianSimplicial, Jia2022Time-spaceGravity}, the path integrals of Lorentzian simplicial quantum gravity have not been studied much in the past.\footnote{In this statement we mean by simplicial quantum gravity the formalism with dynamical lengths. The variation of simplicial quantum gravity with fixed lengths but dynamical lattice graphs has been extensively studied in the form of causal dynamical triangulation \cite{Ambjorn2012NonperturbativeGravity, Loll2020QuantumReview}.} 
Because of the numerical sign problem, naive Monte Carlo simulations do not work efficiently in the Lorentzian as in the Euclidean. This has remained a major obstacle for quantitative studies of Lorentzian simplicial quantum gravity.

In this work we propose to generalize simplicial quantum gravity to the complex domain. This allows us to apply the techniques of complex contour deformation developed in recent years to alleviate the sign problem \cite{AuroraScienceCollaboration2012HighThimble, AlexandruComplexProblem}. By a higher dimensional version of Cauchy's integration theorem, a path integral with a real integration contour can equally be evaluated along a complex contour if the two contours are related across a region where the integrand is holomorphic. The sign problem could be milder on the deformed contour. As reviewed in \cite{AlexandruComplexProblem}, this idea has been successfully applied to various lattice field theories of matter. It has also been applied to analyze gravitational propagators for spin-foam models in the large spin limit \cite{Han2021SpinfoamPropagator}.

Here we show that the complex contour deformation method also works for Lorentzian simplicial quantum gravity. Monte Carlo simulations are performed to compute the expectation value of spacetime lengths in $1+1D$ using the holomorphic gradient flow algorithm (also called the generalized thimble algorithm) \cite{Alexandru2016SignThimbles,  Alexandru2017MonteCarloModel, AlexandruComplexProblem}. It is found that the sign fluctuations are largely suppressed on suitable complex contours. As far as we know, this constitutes the first non-perturbative computation of Lorentzian simplicial gravitational path integrals. It opens the possibility to investigate questions about quantum gravity non-perturbatively and quantitatively using Lorentzian simplicial quantum gravity. 

Notably, the expectation values computed on the complex contours are directly the results of interest. There is no analytic continuation to Euclidean spacetime like in causal dynamical triangulation \cite{Ambjorn2012NonperturbativeGravity}, nor analytic continuation of parameters in the action like in causal sets \cite{Surya2019TheGravity}. These procedures face the open problem of inverse analytic continuation, which does not arise in the method used here.

Besides overcoming the sign problem, another reason to consider complex simplicial quantum gravity is to study singularity resolving processes. Quantum theory assigns non-zero probabilities to certain processes characterized by boundary conditions admitting not real, but complex semi-classical solutions. A standard example is particle tunneling \cite{Turok2014OnTime, ChermanReal-TimeInstantons, Tanizaki2014Real-timeTunneling}. It is conceivable that cosmological and black hole singularity resolving processes (see e.g., \cite{Frolov1981SphericallyGravity, Frolov1989ThroughUniverse, Barrabes1996HowHole, Frolov1998BlackPhysics, Vilenkin1982CreationNothing, Hartle1983WaveUniverse, Halliwell1991Introductory1990, Bojowald2001AbsenceCosmology, Modesto2004DisappearanceGravityb, Ashtekar2005BlackParadigmb, Hayward2006FormationHoles, Hossenfelder2010ConservativeProblem, Haggard2015Quantum-gravityTunneling, Barcelo2014TheResignation, Bianchi2018WhiteHoleb, DAmbrosio2021EndEvaporation, Oriti2017BouncingCondensatesb}) fall into the same category \cite{Hartle1989SimplicalModel, Li1993ComplexMachines, Gielen2015PerfectBounce, Gielen2016QuantumSingularities, Feldbrugge2017LorentzianCosmology, Dorronsoro2017RealCosmology,  Bramberger2017QuantumSingularities, Dittrich2022LorentzianSimplicial}. Lorentzian simplicial quantum gravity provides a formalism to compute the probabilities for such processes. To analyze the semi-classical solutions, the formalism needs to be generalized to the complex domain.

Although simplicial quantum gravity in the complex domain has been studied before  \cite{Hartle1989SimplicalModel, Louko1992ReggeCosmology, Birmingham1995LensCosmology, Birmingham1998ACalculus, Furihata1996No-boundaryUniverse, Silva1999SimplicialField, Silva1999AnisotropicField, Silva2000SimplicialPhi2, daSilvaWormholesMinisuperspace}, the complex theory is reached by analytically continuing the Euclidean theory. In addition, these works concentrated on symmetry-reduced models. 


In this work we specify Lorentzian simplicial gravity in arbitrary dimensions and without symmetry reduction with manifestly holomorphic expressions. Upon analytic continuation, the holomorphic expressions define simplicial gravity in the complex domain. The path integrals based on this complex action encompass both Lorentzian and Euclidean simplicial quantum gravity as special cases with different integration contours.


Along the way, we show that the celebrated Gauss-Bonnet theorem admits a complex generalization. This mathematical results may be of independent interest.

The paper is organized as follows. In \cref{sec:lav} and \cref{sec:a}, we review the geometric quantities of length, volume, and areas of Euclidean simplicial gravity, and generalize the quantities to the Lorentzian and complex domains. In \cref{sec:qg} we define simplicial gravitational path integrals in the Lorentzian and complex domains in terms of manifestly holomorphic expressions. In \cref{sec:hf} we review the holomorphic gradient flow algorithm for numerical computations of path integrals with complex actions. Starting in \cref{sec:2dsqg} we specialize to $2D$ simplicial quantum gravity and present the formulas needed for applying the holomorphic gradient flow algorithm. Along the way we prove a complex version of the Gauss-Bonnet theorem. In \cref{sec:nr} we present numerical results that overcome the sign problem. In \cref{sec:d} we finish with a discussion.

\section{Lengths and volumes}\label{sec:lav}

In simplicial gravity, the basic variable is the squared length, and the Einstein-Hilbert action is written in terms of volume and angles. (See Hamber's textbook \cite{Hamber2009QuantumApproach} for a comprehensive and lucid introduction to Euclidean simplicial quantum gravity.) In this section and next, we start by presenting length, volume and angles for simplicial geometry in the Euclidean domain, and then generalize these quantities to the Lorentzian and complex domains.

\subsection{Squared length as the basic variable}

Given a metric field $g_{ab}$ on a manifold, the \textbf{squared length} $\sigma$ of a line $\gamma$ segment is given by
\begin{align}\label{eq:slfg}
\sigma=\int_\gamma ds^2,
\end{align}
where $ds^2 = g_{ab} dx^a dx^b$ is the line element. 

In simplicial gravity each lattice edge $e$ has a squared lengths $\sigma_e$ with $\gamma$ taken along the edge. In the Euclidean domain, $\sigma\ge 0$. In the Lorentzian domain, we choose the signature convention that $\sigma>0$ for spacelike intervals, $\sigma<0$ for timelike intervals, and $\sigma=0$ for lightlike intervals.

In a continuum field theory, the basic gravitational variable is usually taken to be the metric field $g_{ab}$, and the squared length is derived from $g_{ab}$ using (\ref{eq:slfg}). In contrast, in simplicial gravity the basic variable is usually taken to be the squared lengths $\sigma$ on the lattice edges. A gravitational configuration is given in terms of the squared length on the edges, from which the metric can be derived as follows.

\begin{figure}
    \centering
    \includegraphics[width=.4\textwidth]{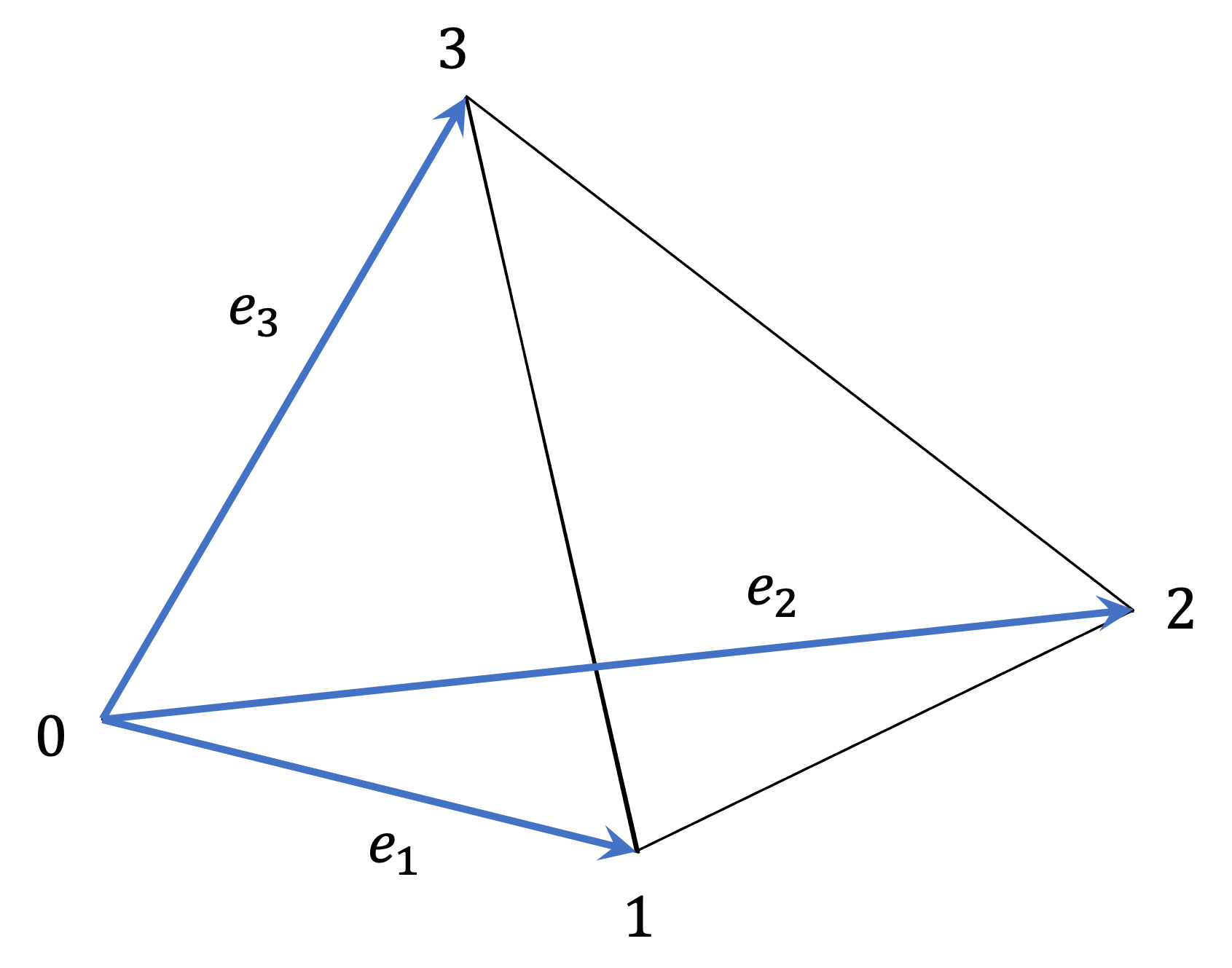}
    \caption{A simplex with labelled vertices $i$ and edge vectors $e_i$.}
    \label{fig:spl1}
\end{figure}

Let a $d$-simplex be given and label the vertices by $0,1,\cdots, d$ (\cref{fig:spl1}). Within the simplex we set up a coordinate system whose basis vectors $e_i$ for $i=1,\cdots, d$ point from vertex $0$ to vertex $i$. Define a dot product $\cdot$ by
\begin{align}\label{eq:edots}
e_i\cdot e_j = \frac{1}{2}(\sigma_{0i}+\sigma_{0j}-\sigma_{ij}),
\end{align}
where $\sigma_{ij}$ for $i,j = 0, 1,\cdots, d$ are the squared lengths of the edges connecting vertices $i$ and $j$. Using the metric
\begin{align}\label{eq:metric}
g_{ij} = \frac{1}{2}(\sigma_{0i}+\sigma_{0j}-\sigma_{ij}),
\end{align}
the dot product of any pair of vectors $u=  u^i e_i$ and $v=  v^i e_i$ can be computed as $u\cdot v =    g_{ij} u^i v^j$, where the Einstein summation convention is used. 


The metric (\ref{eq:metric}) is the simplicial analog of the continuum metric. In the continuum, squared lengths are computed through $ds^2=g_{ab}dx^a dx^b$. On a simplicial lattice, edge squared lengths are computed through
\begin{align}\label{eq:slv}
\sigma=v\cdot v =    g_{ij} v^i v^j,
\end{align}
where $v$ is the edge vector. For edges containing vertex $0$, $v=e_i$, and $v\cdot v=g_{ii}=\sigma_{0i}$. For other edges, $v=e_i-e_j$, and $v\cdot v=g_{ii}-g_{ij}-g_{ji}+g_{jj}=\sigma_{ij}$.

The simplex is understood to have a homogeneous interior. For a line segment within the simplex, the square length is computed by the same formula (\ref{eq:slv}) where $v$ is the vector for the line segment.

\subsection{Complexifying strategy}\label{sec:mtd}

In complexifying simplicial geometry, we adopt a ``squared length based'' methodology. After identifying a quantity of interest, such as volumes and angles, we express it as a function of the squared lengths. The function is chosen to agree with known expressions in the Lorentzian and/or Euclidean domains,
where the squared lengths take real values. In addition, the function should be holomorphic if possible to facilitate the deformations of integration contours when we study of the quantum theory. 


Suppose the above two requirements can be met. Then we can analytically continue the domain of the function to complex squared lengths.
When multi-valued functions such as the square root and the log are present, we will extend the domain to be the corresponding Riemann surfaces. 

As an example, consider the (linear) \textbf{length} defined by $l = \sqrt{\sigma}.$ This function is holomorphic away from the branch point $\sigma=0$. In the Euclidean domain $l>0$. In the Lorentzian domain $l>0$ for spacelike edges, and $l$ is positive imaginary for timelike edges in the current choice of the positive branch for the square root. 




\subsection{Volumes}\label{sec:vol}

The squared length and length given above are special cases of squared volumes and volumes. 

In the continuum, let $s$ be a simplex defined by some unit vectors. Suppose the metric is constant in the region of the simplex. Then the squared volume for the simplex is $\sV = \int_s \det g_{ab}(x) ~d^Dx=\frac{1}{d!}\det g_{ab}$, where $\frac{1}{d!}$ arises because this is for a simplex rather than a hypercube. 

On a simplicial complex, define the \textbf{squared volume} of a $d$-simplex by
\begin{align}\label{eq:svol1}
\sV = \frac{1}{(d!)^2}\det g_{ij},
\end{align}
where $g_{ij}$ as defined in (\ref{eq:metric}) is a function of the edge squared lengths.
An equivalent expression that is manifestly symmetric in the squared lengths is the Cayley-Menger determinant
\begin{align}\label{eq:svol}
\sV = \frac{(-1)^{d+1}}{2^d (d!)^2}
\begin{vmatrix}
 0 & 1 & 1 & 1 & \ldots  & 1 \\
 1 & 0 & \sigma _{01} & \sigma _{02} & \ldots  & \sigma _{0 d} \\
 1 & \sigma _{01} & 0 & \sigma _{12} & \ldots  & \sigma _{1 d} \\
 1 & \sigma _{02} & \sigma _{12} & 0 & \ldots  & \sigma _{2 d} \\
 \vdots  & \vdots  & \vdots  & \vdots  & \ddots & \vdots  \\
 1 & \sigma _{0 d} & \sigma _{1 d} & \sigma _{2 d} & \ldots  & 0 \\
\end{vmatrix}.
\end{align}
The \textbf{volume} $V$ of a $d$-simplex is defined by
\begin{align}\label{eq:vol}
V =\sqrt{\sV}.
\end{align}
Both $\sV$ and $V$ are defined for complex squared lengths. In (\ref{eq:vol}) the squared volume is taken to live on the Riemann surface of the square root function. $V$ is holomorphic as a function of the squared lengths away from the branch points where $\sV=0$. 

In the Euclidean domain, $\sV>0$. In the Lorentzian domain, $\sV \le 0$. The positive branch for the square root is chosen so that $V$ is positive imaginary or zero for Lorentzian simplices.

\begin{example}
In lower dimensions some familiar expressions are recovered. In $1D$ the volumes derived from (\ref{eq:svol}) and (\ref{eq:vol}) are
\begin{align}
\sV=&\sigma_{01},
\\
V =& \sqrt{\sigma_{01}},
\end{align}
which reproduce the length formulas.
In $2D$ the volumes for a triangle $t$ derived from (\ref{eq:svol}) and (\ref{eq:vol}) are
\begin{align}\label{eq:2dsvol}
\sV=&\frac{1}{16} \left(-\sigma _{01}^2-\sigma _{02}^2-\sigma _{12}^2+2  \sigma _{01} \sigma _{02}+2  \sigma _{01} \sigma _{12}+2 \sigma _{02} \sigma _{12}\right),
\\
V =& \frac{1}{4} \sqrt{-\sigma _{01}^2-\sigma _{02}^2-\sigma _{12}^2+2  \sigma _{01} \sigma _{02}+2  \sigma _{01} \sigma _{12}+2 \sigma _{02} \sigma _{12}},
\label{eq:2dvol}
\end{align}
which reproduce Heron's formula for triangle areas. 
\qed
\end{example}

\subsection{Generalized triangle inequalities}\label{sec:gti}

The squared distances must obey certain generalized triangle inequalities to describe Euclidean and Lorentzian simplices.

In the Euclidean domain, a simplex $s$ obeys
\begin{align}\label{eq:egti}
\sV > 0 \quad \text{ for all subsimplices of $s$ including $s$ itself}.
\end{align}
For example, for a triangle this means the squared area and the squared lengths are positive:
\begin{align}
&\sV=\frac{1}{16} \left(-\sigma _{01}^2-\sigma _{02}^2-\sigma _{12}^2+2  \sigma _{01} \sigma _{02}+2  \sigma _{01} \sigma _{12}+2 \sigma _{02} \sigma _{12}\right) >0,
\\ &\sigma_{01},\sigma_{02},\sigma_{03} >0.
\end{align}

\begin{figure}
    \centering
    \includegraphics[width=.4\textwidth]{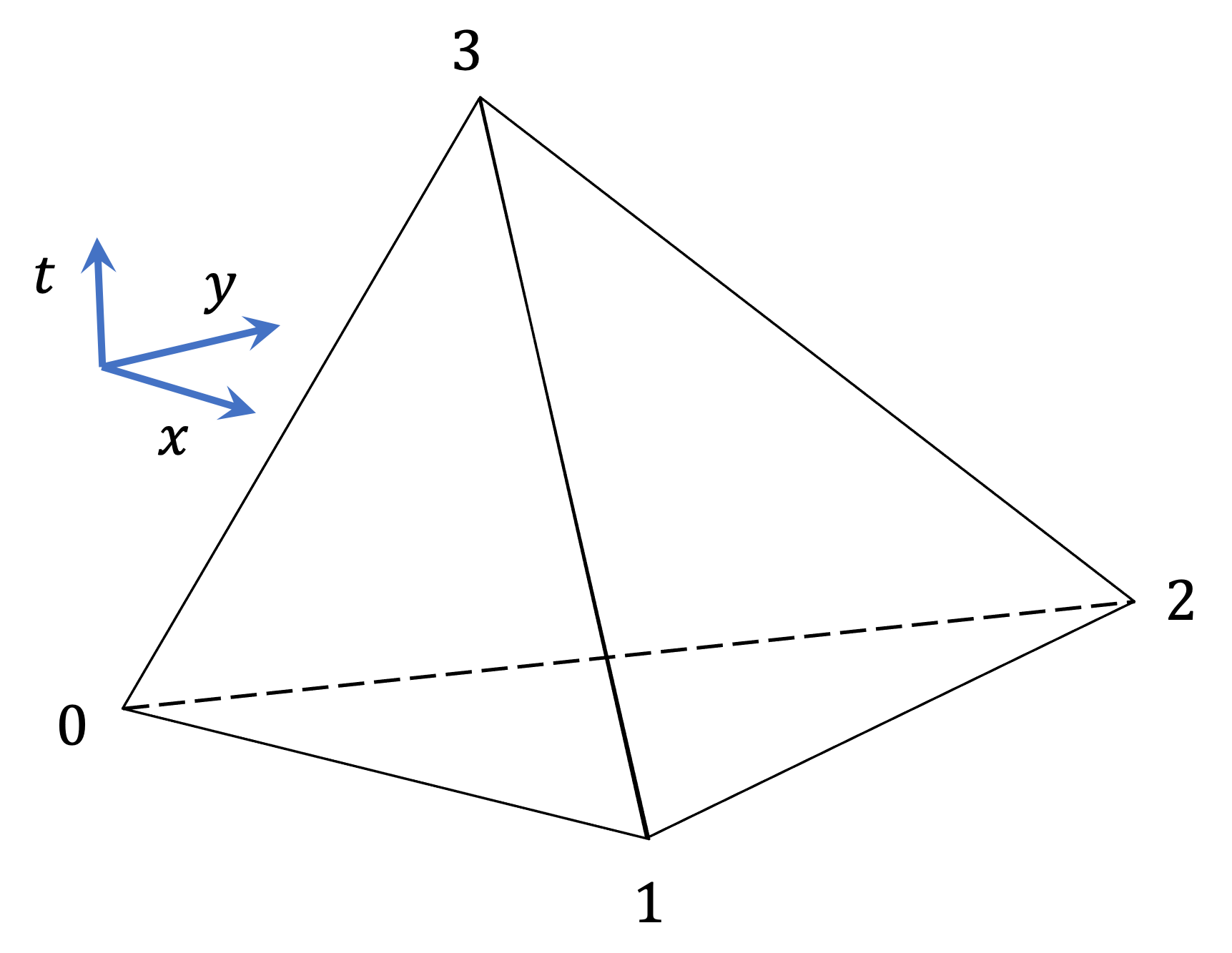}
    \caption{A $3D$ timelike simplex can have a spacelike subsimplex $012$ in addition to timelike subsimplices such as $013$.}
    \label{fig:spl3}
\end{figure} 

In the Lorentzian domain, a simplex $s$ obeys \cite{Tate2012Realizability1-simplex, Asante2021EffectiveGravity}
\begin{align}\label{eq:lgti}
\sV_s < 0; \text{ and } \sV_{r}<0 \implies \sV_{t}<=0 \text{ for all $t\supset r$}.
\end{align}
A simplex is timelike if $\sV<0$, and spacelike if $\sV>0$. In contrast to the Euclidean domain where all simplices and subsimplices have the same causal signature (spacelike), in the Lorentzian domain the subsimplices are allowed to be both timelike and spacelike \Cref{fig:spl3}. The Lorentzian generalized inequalities (\ref{eq:lgti}) first say that the simplex $s$ itself needs to be timelike. Furthermore, if any subsimplex $r$ is timelike, then all subsimplices $t$ containing $r$ cannot be spacelike. This is because a timelike subsimplex cannot be embedded in a spacelike subsimplex. For instance in \Cref{fig:spl3}, if the edge subsimplex $03$ is timelike, then the triangle subsimplices $013$ and $023$ containing the timelike edge $03$ must not be spacelike, which is a reasonable condition.

\section{Angles}\label{sec:a}

\subsection{Euclidean angles}\label{sec:ea}

In Euclidean space, what is the angle $\theta$ bounded by two vectors $a$ and $b$? Since
\begin{align}\label{eq:adotbth}
a\cdot b=|a||b|\cos\theta, \quad |x|:=\sqrt{x\cdot x},
\end{align}
one answer is that $\theta=\cos^{-1}\frac{a\cdot b}{|a||b|}$. 
Another answer is in terms of the scalar wedge product defined by
\begin{align}
a\wedge b=&\sqrt{(a\cdot b)^2-(a\cdot a)(b\cdot b)}.
\label{eq:awedgeb1}
\end{align}
Using $\sin^2 \theta+\cos^2 \theta=1$, it is easy to see that for $\theta>0$,
\begin{align}\label{eq:awedgebth}
a\wedge b=i |a||b|\sin \theta.
\end{align}
Therefore $\theta=\sin^{-1}\frac{a\wedge b}{i|a||b|}$.

The answer (\ref{eq:adotbth}) or (\ref{eq:awedgebth}) in isolation has ambiguities, because different angles can have the same $\cos$ or $\sin$ values. Within a $2\pi$ period, angles are uniquely determined when the information of $\cos^{-1}$ and $\sin^{-1}$ are combined. From (\ref{eq:adotbth}) and (\ref{eq:awedgebth}), we derive that $e^{i\theta}=\frac{1}{|a||b|}(a\cdot b+a\wedge b)$, so\footnote{This formula is related to the so-called ``geometric product'' $\vec{a}\cdot \vec{b}+\vec{a}\wedge \vec{b}$, which offers a way to encode rotations. The difference is that here $\vec{a}\wedge \vec{b}$ is a bivector instead of a scalar.}
\begin{align}\label{eq:ea}
\theta =& -i\log \alpha,
\\
\alpha=&\frac{1}{|a||b|}(a\cdot b+a\wedge b).
\end{align}
This determines $\theta$ uniquely within a $2\pi$ period depending on the choice of the branch for the log function.

\subsection{Complex angles}\label{sec:ca}

In the general complex domain, we take
\begin{align}
\theta =& -i\log \alpha,
\label{eq:theta}
\\
\alpha=&\frac{a\cdot b+a\wedge b}{\sqrt{a\cdot a}\sqrt{b\cdot b}}=\frac{a\cdot b+\sqrt{(a\cdot b)^2-(a\cdot a)(b\cdot b)}}{\sqrt{a\cdot a}\sqrt{b\cdot b}},
\label{eq:alpha}
\end{align}
as the definition of \textbf{complex angles}. Equation (\ref{eq:alpha}) is one of the expressions in Sorkin's definition of Lorentzian angles in the Minkowski plane \cite{SorkinLorentzianVectors}.\footnote{In Sorkin's definition of Lorentzian triangles \cite{SorkinLorentzianVectors}, (\ref{eq:alpha}) is used for angles bounded by two spacelike vectors in the same quadrant, and angles bounded by a spacelike vector and a timelike vector. A different expression is used for angles bounded by two timelike vectors in the same quadrant.}
Here we recognize that more generally, (\ref{eq:theta}) and (\ref{eq:alpha}) offer a unified definition for Euclidean, Lorentzian, and complex angles in all cases.\footnote{For the formula to apply to the Euclidean case, the $-i$ factor in (\ref{eq:theta}) is necessary. In comparing with other works based on Sorkin's definition one should keep in mind that the $-i$ factor is absent there. In addition, for Lorentzian angles (\ref{eq:la}) defined below differs in the choice of square root branches from Sorkin's formula.}

In terms of the edge squared lengths (\cref{fig:tri}),
\begin{align}
a\cdot b=& \frac{1}{2}(\sigma_{a}+\sigma_{b}-\sigma_c),
\label{eq:adotb}
\\
a\cdot a=&\sigma_a, \quad  b\cdot b=\sigma_b,
\\
a\wedge b=& \frac{1}{2} \sqrt{\sigma _{a}^2+\sigma _{b}^2+\sigma _{c}^2-2  \sigma _{a} \sigma _{b}-2  \sigma _{b} \sigma _{c}-2 \sigma _{c} \sigma _{a}}.
\label{eq:awedgeb}
\end{align}
Therefore
\begin{align}\label{eq:ca}
\theta =& -i\log \alpha,\nonumber
\\ \alpha=& \frac{\sigma_{a}+\sigma_{b}-\sigma_c+\sqrt{\sigma _{a}^2+\sigma _{b}^2+\sigma _{c}^2-2  \sigma _{a} \sigma _{b}-2  \sigma _{b} \sigma _{c}-2 \sigma _{c} \sigma _{a}}}{2\sqrt{\sigma_a}\sqrt{\sigma_b}}.
\end{align}
We take (\ref{eq:ca}) as the definition of \textbf{complex angles} in terms of complex squared lengths. This function is holomorphic away from the log and square root branch points. At the square root branch point of $a=0$ or $b=0$, the denominator becomes $0$. We will comment more on the (ir)relevance of this case in the end of \Cref{sec:la} and in \Cref{sec:cb}.

\begin{figure}
    \centering
    \includegraphics[width=.4\textwidth]{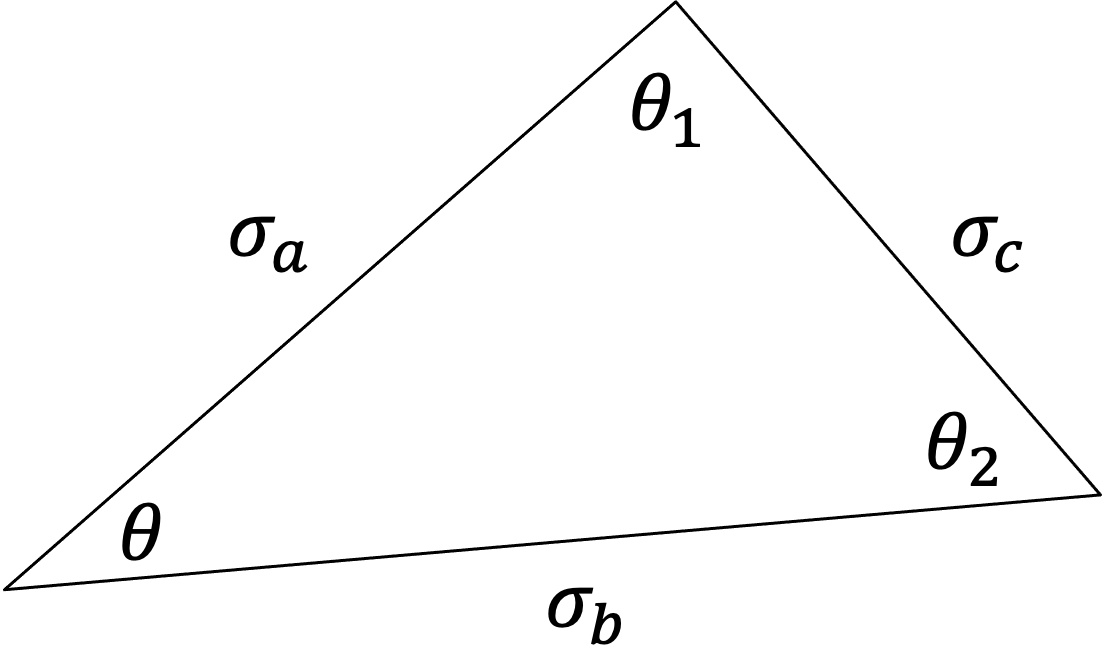}
    \caption{A triangle with squared lengths $\sigma_{a}, \sigma_{b}, \sigma_c$.}
    \label{fig:tri}
\end{figure} 

Note from (\ref{eq:2dsvol}) that the input $A$ to the numerator square root equals $-16\sV$, where $\sV$ is the squared volume for the triangle in \cref{fig:tri}. By the triangle inequalities of \cref{sec:gti}, $A>0$ for a Lorentzian triangle and $A<0$ for an Euclidean triangle.

For Euclidean angles the principal branches of the log and square root functions are chosen. The complex angles then reduce to the correct Euclidean angles, since the former are obtained by generalizing the latter. The choices of branches for Lorentzian angles are specified below.

\subsection*{Sum of complex angles in a triangle}

The angles of an Euclidean triangle sum to $\pi$. In the complex domain, this generalizes to $(2n+1)\pi$ with $n\in \mathbb{Z}$.
\begin{proposition}\label{th:sta}
The complex angles sum to $(2n+1)\pi$ with $n\in \mathbb{Z}$ for a triangle of complex squared edge lengths.
\end{proposition}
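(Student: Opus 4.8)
The plan is to work directly with the definition of the complex angle $\theta_k = -i\log\alpha_k$ for the three corners $k=1,2,3$ of the triangle and show that $\sum_k \theta_k = (2n+1)\pi$. Since $-i\log\alpha_k = (2n+1)\pi$ is equivalent to $\sum_k \log\alpha_k = i(2n+1)\pi$, which is equivalent to $\prod_k \alpha_k = e^{i(2n+1)\pi} = -1$, the proposition reduces to the purely algebraic claim

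\begin{align}\label{eq:prodalpha}
\alpha_1\,\alpha_2\,\alpha_3 = -1
\end{align}

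as an identity of holomorphic functions on the Riemann surface of the squared lengths $\sigma_a,\sigma_b,\sigma_c$.

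**First** I would write each $\alpha_k$ explicitly using \eqref{eq:ca}, relabelling so that corner $k$ is bounded by the two edges meeting there. Writing $Q = \sigma_a^2+\sigma_b^2+\sigma_c^2 - 2\sigma_a\sigma_b - 2\sigma_b\sigma_c - 2\sigma_c\sigma_a$ (the common radicand, equal to $-16\sV$), each numerator has the form $(\text{sum of two squared lengths} - \text{third}) + \sqrt{Q}$, and each denominator is $2\sqrt{\sigma_i}\sqrt{\sigma_j}$ for the appropriate pair. The product of the three denominators is then $8\,\sigma_a\sigma_b\sigma_c$ (up to branch bookkeeping), while the product of the three numerators is a polynomial in the $\sigma$'s and in $\sqrt{Q}$. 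The clean way to handle the numerators is to pair up the $\pm\sqrt{Q}$ structure: I would note that $(P+\sqrt{Q})$ for each corner, where $P$ is the rational part, satisfies $(P+\sqrt{Q})(P-\sqrt{Q}) = P^2 - Q$, and $P^2 - Q$ factors nicely into products of squared lengths. This lets me show that each $\alpha_k$ can be rewritten so that the $\sqrt{Q}$ is removed from the denominator or simplified, after which the triple product collapses. The target identity \eqref{eq:prodalpha} is a polynomial identity once $\sqrt{Q}$ and $\sqrt{\sigma_i}$ are treated as formal variables satisfying their defining quadratic relations, so in principle it can be verified by direct (if tedious) expansion; I would carry out the algebra in the form that minimizes the number of surd terms.

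**The main obstacle** is the branch bookkeeping rather than the algebra. The identity $\prod_k\alpha_k = -1$ only pins down $\sum_k\theta_k$ modulo $2\pi$, i.e. it shows the sum lies in $\{(2n+1)\pi : n\in\mathbb{Z}\}$ but does not by itself fix which integer $n$ occurs, and the choice of branches for the three square roots $\sqrt{\sigma_i}$ and for the logarithm must be made consistently for \eqref{eq:prodalpha} to hold with the correct sign. Concretely, each $\sqrt{Q}$ and each $\sqrt{\sigma_i}$ carries a sign ambiguity, and flipping a branch can send the product to $+1$ instead of $-1$. To control this, I would verify the identity first in the Euclidean regime, where all branches are principal and the classical result $\theta_1+\theta_2+\theta_3 = \pi$ holds (the $n=0$ case), thereby fixing the overall sign. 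Then, invoking holomorphicity from \cref{sec:mtd}, I would argue by analytic continuation: since each $\alpha_k$ is holomorphic away from the branch points and the product $\prod_k\alpha_k$ is a single-valued holomorphic function equal to the constant $-1$ on the Euclidean domain, it remains equal to $-1$ wherever it is analytically continued, so $\sum_k\theta_k$ stays in $(2n+1)\pi$ and changes $n$ only by integer jumps as the configuration crosses branch cuts. This reduces the whole proposition to (i) the algebraic identity \eqref{eq:prodalpha} and (ii) a single boundary-value check in the Euclidean case, which is the cleanest way to sidestep a case-by-case analysis of signatures.
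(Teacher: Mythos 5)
Your proposal is correct and takes essentially the paper's route: the paper likewise reduces the claim to the algebraic identity $\alpha\,\alpha_1\alpha_2=-1$ (computed directly, via the intermediate $\alpha_1\alpha_2=\bigl(-\sigma_a-\sigma_b+\sigma_c+\sqrt{Q}\bigr)/\bigl(2\sqrt{\sigma_a}\sqrt{\sigma_b}\bigr)$ with $Q$ your common radicand) and then invokes $\log(z_1z_2)=\log z_1+\log z_2$ up to multiples of $2\pi i$ to conclude $\theta+\theta_1+\theta_2=-i\log(-1)+2\pi n=(2n+1)\pi$. The branch bookkeeping you single out as the main obstacle is in fact moot: the proposition asserts the sum only modulo $2\pi$ (no particular $n$ is claimed), and in the triple product each $\sqrt{\sigma_i}$ enters squared while all three corners share a single value of $\sqrt{Q}$ at any given point of the Riemann surface, so your Euclidean boundary check and analytic-continuation step, while valid, are unnecessary.
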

\begin{proof}
Consider a triangle with complex squared lengths $\sigma_{a}, \sigma_{b}, \sigma_c$ (\Cref{fig:tri}), and complex angles $\theta=-i\log\alpha, \theta_1=-i\log\alpha_1, \theta_2=-i\log\alpha_2$. A straightforward calculation using (\ref{eq:ca}) yields
\begin{align}
\alpha_1 \alpha_2=\frac{-\sigma_{a}-\sigma_{b}+\sigma_c+\sqrt{\sigma _{a}^2+\sigma _{b}^2+\sigma _{c}^2-2  \sigma _{a} \sigma _{b}-2  \sigma _{b} \sigma _{c}-2 \sigma _{c} \sigma _{a}}}{2\sqrt{\sigma_a}\sqrt{\sigma_b}}.
\end{align}
A similar calculation yields $\alpha \alpha_1 \alpha_2=-1$. For the complex log function, $\log(z_1 z_2)=\log z_1 +\log z_2$ up to multiples of $2\pi i$. Therefore $\theta+\theta_1+\theta_2=-i\log(-1)+2 \pi n=(2n+1)\pi$, where $n$ is an integer. 
\end{proof}

\subsection{Lorentzian angles}\label{sec:la}

In this section, we consider angles for Lorentzian simplicial geometries that obey the Lorentzian generalized triangle inequalities (\ref{eq:lgti}). In previous works \cite{SorkinLorentzianVectors, Tate2011Fixed-topologyDomain, Asante2021EffectiveGravity},
not one, but multiple expressions for Lorentzian angles in terms of log and trigonometric functions were used depending on where the edges lie in the Minkowski plane. A merit of the complex angle defined above is that it unifies these multiple cases (as well as the Euclidean case) in one formula. 

Here we focus on convex angles, because in simplicial gravity only these arise from individual simplices. Non-convex angles arise from summing the convex angles of individual simplices. Here we consider the branch choice 
\begin{align}
\theta =& -i \Log \alpha,\nonumber
\\
\alpha=&
\frac{a\cdot b+\sqrt{(a\cdot b)^2-(a\cdot a)(b\cdot b)}}{\sqrt{a\cdot a-0i}\sqrt{b\cdot b-0i}},\label{eq:la}
\end{align}
for Lorentzian angles. In terms of the squared lengths,
\begin{align}
\alpha=& \frac{\sigma_{a}+\sigma_{b}-\sigma_c+\sqrt{\sigma _{a}^2+\sigma _{b}^2+\sigma _{c}^2-2  \sigma _{a} \sigma _{b}-2  \sigma _{b} \sigma _{c}-2 \sigma _{c} \sigma _{a}}}{2\sqrt{\sigma_a-0i}\sqrt{\sigma_b-0i}}.
\end{align}
Here
\begin{align}\label{eq:lpb}
\Log z =& \log r + i\phi, \quad z=r e^{i\phi}\text{ with }\phi\in (-\pi,\pi]
\\\sqrt{z}=&\sqrt{r}e^{i\phi/2}, \quad z=r e^{i\phi}\text{ with }\phi\in (-\pi,\pi],
\label{eq:sqrtb}
\\\sqrt{z-0i}=&\sqrt{r}e^{i\phi/2}, \quad z=r e^{i\phi}\text{ with }\phi\in [-\pi,\pi).\label{eq:bcc}
\end{align}
The first two are just the principal branches of log and square root. The third one $\sqrt{z-0i}$ is negative imaginary for $z<0$. The symbol $-0i$ is a reminder that $z<0$ is continuously connected to $z>0$ through the lower complex plane instead of the upper one. 

The following properties hold for Lorentzian angles.
\begin{proposition}\label{prop:laa}
The Lorentzian convex angles $\theta$ defined by formula (\ref{eq:la}) are additive. 
\end{proposition}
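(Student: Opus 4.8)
The plan is to show that the convex angle is the \emph{difference} of a single-valued ``angular coordinate'' attached to each ray, so that additivity collapses to a telescoping identity. First I would record that $\theta$ in (\ref{eq:la}) depends only on the rays spanned by $a$ and $b$: the numerator and denominator of $\alpha$ scale identically under $a\mapsto\lambda a$, $b\mapsto\mu b$ with $\lambda,\mu>0$, so $\alpha$, and hence $\theta$, is unchanged. It therefore suffices to work with representatives on the unit hyperbolae of the Minkowski plane spanned by the rays. I would then parametrize the four causal sectors, writing the right spacelike wedge as $(\sinh\eta,\cosh\eta)$, the future timelike wedge as $(\cosh\zeta,\sinh\zeta)$, and so on, and define an angular coordinate $\psi$ by setting $\psi=-i\eta$ on the first sector and adding $\pi/2$ for each lightcone crossed while keeping the $-i\,(\text{rapidity})$ dependence inside each wedge.

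The heart of the proof is the claim $\theta(a,b)=\psi(b)-\psi(a)$ for every pair of rays bounding a convex angle, which I would verify by evaluating (\ref{eq:la}) sector by sector. For two spacelike vectors, $a\cdot b=\cosh(\eta_2-\eta_1)$ and $(a\cdot b)^2-(a\cdot a)(b\cdot b)=\sinh^2(\eta_2-\eta_1)$, giving $\alpha=e^{\eta_2-\eta_1}$ and $\theta=-i(\eta_2-\eta_1)$. For a spacelike and a future timelike vector, the prescription (\ref{eq:bcc}) contributes $\sqrt{-1-0i}=-i=e^{-i\pi/2}$, producing $\alpha=e^{(\zeta-\eta)+i\pi/2}$ and $\theta=\tfrac{\pi}{2}-i(\zeta-\eta)$. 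For two timelike vectors the two factors of $-i$ combine to $-1$ and $\theta$ is again purely imaginary. In each case the principal branch (\ref{eq:lpb}) returns exactly $\psi(b)-\psi(a)$ with no extra $2\pi i$. Equivalently, these computations package into the multiplicativity $\alpha(a,c)=\alpha(a,b)\,\alpha(b,c)$, in the same spirit as the identity $\alpha\alpha_1\alpha_2=-1$ exploited in \cref{th:sta}.

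Granting $\theta(a,b)=\psi(b)-\psi(a)$, additivity is immediate: for $b$ lying inside the convex angle from $a$ to $c$,
\begin{align}
\theta(a,c)=\psi(c)-\psi(a)=\big(\psi(b)-\psi(a)\big)+\big(\psi(c)-\psi(b)\big)=\theta(a,b)+\theta(b,c).\nonumber
\end{align}
I expect the main obstacle to be the branch bookkeeping across the lightcone: I must check that the $-0i$ prescriptions in (\ref{eq:bcc}) deposit precisely one factor $e^{-i\pi/2}$ per crossing, and that for a convex configuration the accumulated phase of $\alpha$ never leaves the principal range $(-\pi,\pi]$, so that the logarithm in (\ref{eq:la}) remains additive and no spurious $2\pi i$ is generated. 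Once this $\pi/2$-per-crossing accounting is established uniformly across all sector pairings, the null (lightlike) limiting cases follow by continuity of $\psi$ in the rapidity within each wedge.
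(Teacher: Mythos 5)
Your proposal is correct, but it takes a genuinely different route from the paper. The paper's proof is coordinate-free and uniform over all sector pairings: following Sorkin, it writes the intermediate vector as $b=\lambda a+\mu c$ with $\lambda,\mu\ge 0$ and substitutes to obtain the single algebraic identity $\alpha(a,b)\,\alpha(b,c)=\alpha(a,c)$ with no case analysis at all; it then kills the residual $2\pi n$ ambiguity of the logarithm by invoking \cref{prop:rtheta} --- since all three angles are convex, the real parts of both sides of the putative identity lie in $\{0,\pi/2,\pi\}$, forcing $n=0$. You instead build an angular potential $\psi$ on rays (scale invariance of $\alpha$, $-i\,$rapidity inside each wedge, $\pi/2$ per lightcone crossing), verify $\theta(a,b)=\psi(b)-\psi(a)$ sector by sector, and let additivity telescope. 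Your route costs a five-case verification where the paper needs one line, but it buys more: the sector computations simultaneously re-derive \cref{prop:caba}, \cref{lm:aste} and the $\pi/2$-per-crossing content of \cref{prop:rtheta}, and they make the $-0i$ branch bookkeeping of (\ref{eq:bcc}) completely explicit rather than delegated to a separately proved proposition. Your closing ``no spurious $2\pi i$'' check is in substance identical to the paper's: the accumulated argument of $\alpha(a,b)\alpha(b,c)$ is $(N_1+N_2)\pi/2$ with $N_1+N_2\le 2$ because the total angle is convex, so it never leaves the principal range $(-\pi,\pi]$ of (\ref{eq:lpb}).

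Two small cautions. First, $\theta$ as defined by (\ref{eq:la}) is symmetric in its two arguments while $\psi(b)-\psi(a)$ is antisymmetric, so your key identity requires an orientation convention (traverse the rays in the cyclic order $a,b,c$), which the convexity hypothesis supplies; state this explicitly. Second, the final remark that lightlike cases ``follow by continuity of $\psi$'' is wrong as stated: $\Im\psi$ diverges at the wedge boundaries, and $\theta$ itself diverges when an edge becomes null. This is harmless, because the paper explicitly excludes lightlike edges from the discussion of Lorentzian angles, so the proposition neither needs nor admits a null limiting case --- simply drop that sentence.
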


\begin{proposition}\label{prop:caba}
The complex Lorentzian angle $\theta$ is related to the Lorentz boost angle $\theta_{\text{boost}}$ by
\begin{align}\label{eq:caba}
\theta = -i \theta_{\text{boost}}.
\end{align}
Here the convention is that $\theta_{\text{boost}}>0$ for a boost angle relating spacelike vectors, and $\theta_{\text{boost}}<0$ for a boost angle relating timelike vectors.
\end{proposition}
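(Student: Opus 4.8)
The plan is to verify the identity by direct computation, reducing to the two cases in which a boost angle is defined: $a$ and $b$ both spacelike, or both timelike. Since the expression (\ref{eq:la}) for $\alpha$ is invariant under rescaling $a\to\lambda a$, $b\to\mu b$ by positive reals $\lambda,\mu$ (numerator and denominator both scale by $\lambda\mu$), I may normalize $a$ and $b$ to unit vectors without loss of generality. Fixing the signature so that $v\cdot v>0$ for spacelike and $v\cdot v<0$ for timelike vectors (e.g.\ $\mathrm{diag}(-1,+1)$ in coordinates $(t,x)$), I would parametrize a unit spacelike vector as $(\sinh\eta,\cosh\eta)$ and a unit timelike vector as $(\cosh\eta,\sinh\eta)$, where the parameter $\eta$ is exactly a rapidity: two vectors of the same type are related by the boost whose rapidity is the difference of their parameters, and that difference is the boost angle $\theta_{\text{boost}}$.

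First I would treat the spacelike case. Taking $a,b$ with parameters differing by $\theta_{\text{boost}}>0$, a short computation gives $a\cdot b=\cosh\theta_{\text{boost}}$, $a\cdot a=b\cdot b=1$, and $(a\cdot b)^2-(a\cdot a)(b\cdot b)=\sinh^2\theta_{\text{boost}}$. The denominator $\sqrt{a\cdot a-0i}\sqrt{b\cdot b-0i}$ equals $1$, and with the principal square root the numerator becomes $\cosh\theta_{\text{boost}}+\sinh\theta_{\text{boost}}=e^{\theta_{\text{boost}}}$, so $\alpha=e^{\theta_{\text{boost}}}$ and $\theta=-i\Log e^{\theta_{\text{boost}}}=-i\theta_{\text{boost}}$.

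The timelike case is where the branch choices do the real work, and I expect it to be the main obstacle. Here $a\cdot a=b\cdot b=-1$, so each denominator factor is $\sqrt{-1-0i}$, which by the prescription (\ref{eq:bcc}) (writing $-1=e^{-i\pi}$ with $\phi\in[-\pi,\pi)$) equals $e^{-i\pi/2}=-i$; the denominator is therefore $(-i)^2=-1$. With the convention $\theta_{\text{boost}}=-\eta<0$ where $\eta>0$ is the magnitude of the relating boost, one finds $a\cdot b=-\cosh\eta$ and again $(a\cdot b)^2-(a\cdot a)(b\cdot b)=\sinh^2\eta$, whose principal root is $\sinh\eta$. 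The numerator is then $-\cosh\eta+\sinh\eta=-e^{-\eta}$, and dividing by $-1$ gives $\alpha=e^{-\eta}=e^{\theta_{\text{boost}}}$, so once more $\theta=-i\theta_{\text{boost}}$. The delicate point is precisely that the two factors of $-i$ from the negative-norm denominators, together with the sign convention that makes $\theta_{\text{boost}}$ negative for timelike boosts, conspire to reproduce the same relation $\alpha=e^{\theta_{\text{boost}}}$ as in the spacelike case; getting any of these branch or sign choices wrong would flip the sign of the result.

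Finally I would record the consistency check that both the complex angle (by \Cref{prop:laa}) and the rapidity are additive under composition of boosts, so the linear relation $\theta=-i\theta_{\text{boost}}$ is the unique one compatible with additivity together with the infinitesimal normalization, which corroborates the direct computation.
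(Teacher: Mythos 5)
Your proposal is correct and takes essentially the same route as the paper: both reduce to the two same-quadrant cases where a boost angle exists, normalize the vectors, track the $\sqrt{\,\cdot\,-0i}$ branch factors in the denominator (the $(-i)^2=-1$ in the timelike case), and conclude $\alpha=e^{\theta_{\text{boost}}}$. The only cosmetic difference is that you parametrize unit vectors by rapidity and use $\cosh\eta+\sinh\eta=e^{\eta}$, where the paper works with $\hat{a}\cdot\hat{b}$ and the identity $\cosh^{-1}z=\log(z+\sqrt{z^2-1})$ — the same identity in different form.
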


\begin{proposition}\label{prop:casl}
Between two edges related by the reflection across a light ray, the angle $\theta$ equals
\begin{align}
\theta = \pi/2,
\end{align}
whose imaginary part vanishes.
\end{proposition}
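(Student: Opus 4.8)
Proposition~\ref{prop:casl} asserts that if two edges are related by reflection across a light ray, the complex Lorentzian angle between them equals $\pi/2$, with vanishing imaginary part.

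**Setting up the reflection.** The plan is to work in the Minkowski plane with null coordinates, where reflection across a light ray (say the line $x^+ = 0$, i.e. one of the lightcone axes) acts very simply. If I write a vector $a$ in light-cone components $a = (a^+, a^-)$ with the Minkowski dot product taking the form $a \cdot b \propto a^+ b^- + a^- b^+$, then reflection across the light ray $x^+ = 0$ sends $a = (a^+, a^-)$ to $b = (-a^+, a^-)$ (or across $x^- = 0$ sends it to $(a^+, -a^-)$). The key computational payoff is that the dot products $a\cdot a$, $b \cdot b$, and $a \cdot b$ become extremely simple in these coordinates.

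**Main computation.** The core of the proof is to plug these reflected vectors into the defining formula~(\ref{eq:la}) for $\alpha$ and show that $\alpha$ comes out to a pure phase $e^{i\pi/2} = i$, so that $\theta = -i\Log\alpha = -i \cdot (i\pi/2) = \pi/2$. Concretely, I expect that under the reflection, $a \cdot a$ and $b \cdot b$ are equal (reflection preserves the norm), while $a \cdot b$ vanishes or the wedge term $\sqrt{(a\cdot b)^2 - (a\cdot a)(b\cdot b)}$ combines with $a \cdot b$ to produce a numerator whose magnitude matches $\sqrt{a\cdot a}\sqrt{b\cdot b}$ up to the factor $i$. The cleanest route is probably to compute $a\cdot b$ and $a \cdot a = b\cdot b$ directly in null coordinates and verify that $\alpha = i$ exactly, tracking the branch choices in~(\ref{eq:sqrtb}) and~(\ref{eq:bcc}) carefully. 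I expect that since $a$ and $b$ are mirror images, one is spacelike iff the other is, so $a\cdot a$ and $b\cdot b$ have the same sign, which fixes how the $\sqrt{a\cdot a - 0i}\sqrt{b\cdot b - 0i}$ denominator is evaluated.

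**The main obstacle.** The hard part will be handling the branch choices consistently, since the result $\theta = \pi/2$ with vanishing imaginary part is a statement about a \emph{real} angle, and the whole subtlety of formula~(\ref{eq:la}) lies in the $-0i$ prescription~(\ref{eq:bcc}) and the principal-branch conventions. I would need to check that the phases accumulated in the numerator square root and in the two denominator square roots conspire so that the imaginary part of $\Log \alpha$ equals $\pi/2$ exactly and its real part (which would be the imaginary part of $\theta$) vanishes. A secondary subtlety is whether the statement should be interpreted as covering both the spacelike-pair and timelike-pair reflections (reflection across either of the two null directions); I would verify that the symmetric structure of the formula gives $\pi/2$ in both configurations, or else specialize to the convex case as the surrounding discussion suggests. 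Finally, I would sanity-check the answer against Proposition~\ref{prop:caba}: a real angle $\theta = \pi/2$ corresponds to $\theta_{\text{boost}} = i\pi/2$, an imaginary boost parameter, which is the expected signature of crossing the lightcone — this consistency check guides the branch bookkeeping.
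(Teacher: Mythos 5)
Your overall strategy --- plug the reflected pair into (\ref{eq:la}), work in null coordinates, and show $\alpha = i$ exactly --- is sound and in fact lands on the same computation as the paper, and your guesses that $a\cdot b$ vanishes and that the target is $\alpha=i$ are both right. But your proposal rests on a false geometric expectation that would derail precisely the branch bookkeeping you identify as the crux: reflection across a light ray in the Minkowski plane does \emph{not} preserve the causal character of a vector. In your own null coordinates, where $a\cdot a \propto a^+ a^-$, the reflected vector $b = (-a^+, a^-)$ has $b\cdot b = -\,a\cdot a$, so one edge is spacelike and the other timelike ($\sigma_b = -\sigma_a$); it is never the case that ``one is spacelike iff the other is.'' This sign flip is not a side detail but the entire mechanism of the proposition: it is what makes the denominator $\sqrt{a\cdot a-0i}\,\sqrt{b\cdot b-0i}$ come out negative imaginary, so that a real positive numerator yields $\alpha=i$. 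Under your assumed same-sign norms the denominator would be real, and worse, the configuration you anticipate is geometrically impossible: in $1+1$D two spacelike vectors cannot satisfy $a\cdot b=0$, since the orthogonal complement of a spacelike line is timelike. Consequently the case split you propose (``spacelike-pair and timelike-pair reflections'') is empty --- the reflected pair is always mixed --- and your sanity check via \cref{prop:caba} does not apply either, because no boost relates a spacelike vector to a timelike one; the paper's consistency check instead splits the angle at the light ray into two boost angles that cancel by symmetry, with $\Re\theta=\pi/2$ supplied by \cref{lm:aste} (one light-ray crossing).

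Once the sign flip is in place, your computation closes in one line and coincides with the paper's proof: $\sigma_b=-\sigma_a$, and the third side $a-b=(2a^+,0)$ is null, so $\sigma_{ab}=0$; hence $a\cdot b=\frac{1}{2}(\sigma_a+\sigma_b-\sigma_{ab})=0$ and $a\wedge b=\sqrt{-\sigma_a\sigma_b}=\sqrt{\sigma_a^2}=\abs{\sigma_a}$, while the $-0i$ prescription (\ref{eq:bcc}) gives $\sqrt{\sigma_a-0i}\,\sqrt{-\sigma_a-0i}=-i\abs{\sigma_a}$ regardless of which edge is the spacelike one. Thus $\alpha=\abs{\sigma_a}/(-i\abs{\sigma_a})=i$ and $\theta=-i\Log i=\pi/2$, with vanishing imaginary part as claimed.
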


\begin{proposition}\label{prop:fp2p}
In the flat Minkowski plane, the angles around a point sum to $2\pi$.
\end{proposition}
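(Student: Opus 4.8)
The plan is to collapse the sum down to four elementary angles, each pinned by \cref{prop:casl}. First I would fix Minkowski coordinates $(t,x)$ with line element $ds^2=-dt^2+dx^2$, consistent with the convention $\sigma>0$ for spacelike intervals. The two null lines $t=\pm x$ through the point cut a punctured neighborhood into four open sectors, in cyclic order: the right spacelike sector $x>|t|$, the future timelike sector $t>|x|$, the left spacelike sector $x<-|t|$, and the past timelike sector $t<-|x|$, separated successively by the four null half-rays. The claim to establish is that for any family of edges emanating from the point, the convex angles between cyclically consecutive edges sum to $2\pi$.

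The first step is to use additivity (\cref{prop:laa}) to make the total independent of the edge configuration. Inserting an extra edge between two consecutive edges splits the enclosed convex angle into two whose sum is unchanged, and conversely deleting an intermediate edge merges two angles into their sum; hence the total around the point equals the total for any other cyclic family that sweeps once around. I would therefore adjoin the four coordinate half-axes $+x,+t,-x,-t$ — one representative in the interior of each sector, matching the cyclic order above — and then coarsen down to exactly these four, so that
\[
\textstyle\sum\theta \;=\; \theta(+x,+t)+\theta(+t,-x)+\theta(-x,-t)+\theta(-t,+x).
\]

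The second step is to recognize each of these four angles as a light-ray reflection. The map $(t,x)\mapsto(x,t)$ fixes the null line $t=x$ and sends $+x=(0,1)$ to $+t=(1,0)$, and also sends $-x$ to $-t$; the map $(t,x)\mapsto(-x,-t)$ fixes $t=-x$ and sends $+t$ to $-x$ and $-t$ to $+x$. Thus each consecutive pair of reference axes is related by reflection across precisely the light ray separating their two sectors, so \cref{prop:casl} gives $\theta=\pi/2$ for each pair. Summing the four contributions yields $4\cdot(\pi/2)=2\pi$, as claimed; note that each term is purely real, so no imaginary part survives in the total.

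The step I expect to be the main obstacle is justifying the global use of additivity in the first step: \cref{prop:laa} must be applied across all four lightcone crossings, not merely within a single spacelike or timelike sector where the angle reduces to a pure boost (\cref{prop:caba}). Because $\theta=-i\Log\alpha$ is assembled from the multivalued log and square root, one must verify that the branch prescriptions (\ref{eq:lpb})--(\ref{eq:bcc}), in particular the $-0i$ continuation through the lower half-plane, render the accumulated angle single-valued around the full loop. This is what ensures the coarsening to the four axes incurs no spurious $2\pi$ monodromy and that the purely imaginary boost contributions within each sector cancel exactly when combined into the axis-to-axis angles. Establishing that additivity survives each lightcone crossing is the crux; once it does, the reflection identity of \cref{prop:casl} does the rest.
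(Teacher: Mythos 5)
Your proof is correct and follows essentially the same route as the paper's: the paper likewise takes four edges in the middle of the four quadrants (your coordinate half-axes), notes that consecutive ones are related by reflection across the intervening light ray so that each angle is $\pi/2$ by \cref{prop:casl}, and sums to $2\pi$. Your additional step of invoking additivity (\cref{prop:laa}) to reduce an arbitrary edge configuration to these four axes is left implicit in the paper, and your worry about branch monodromy is already resolved by the paper's proof of \cref{prop:laa} itself, which pins the real parts via \cref{prop:rtheta} to rule out spurious multiples of $2\pi$.
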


\begin{proposition}\label{prop:rtheta}
For a convex Lorentzian angle $\theta$,
\begin{align}
\Re \theta = N\pi/2,
\end{align}
where $N=0,1,2$ is the number of light rays enclosed within the angle.
\end{proposition}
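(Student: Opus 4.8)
The plan is to reduce the claim to a computation of $\arg\alpha$ and then exploit additivity. First I would observe that, with the principal branch (\ref{eq:lpb}), writing $\Log\alpha = \log|\alpha| + i\arg\alpha$ with $\arg\alpha\in(-\pi,\pi]$ gives $\theta = -i\Log\alpha = \arg\alpha - i\log|\alpha|$, so that $\Re\theta = \arg\alpha$. Thus the proposition is equivalent to showing $\arg\alpha = N\pi/2$, i.e.\ that $\alpha$ lands on the positive real axis ($N=0$), the positive imaginary axis ($N=1$), or the negative real axis ($N=2$). Since $\alpha$ in (\ref{eq:la}) is invariant under rescaling $a\to\lambda a,\ b\to\mu b$ by positive reals, I may take $a,b$ to be unit vectors, parametrized within each of the four sectors (right/left spacelike, future/past timelike) cut out by the two null lines by $(\sinh\mu,\cosh\mu)$, $(\cosh\mu,\sinh\mu)$, and so on.

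The key structural input is additivity of convex Lorentzian angles (\Cref{prop:laa}). I would place reference vectors $p_0=a,\,p_1,\dots,p_N=b$, one strictly inside each sector swept out by the convex angle, so that each consecutive pair $(p_{i-1},p_i)$ lies in two adjacent sectors and the sub-angle between them encloses exactly one null ray. Additivity then gives $\theta(a,b)=\sum_{i=1}^N \theta(p_{i-1},p_i)$, and taking real parts, $\Re\theta(a,b)=\sum_i \Re\theta(p_{i-1},p_i)$. Keeping all $p_i$ off the null rays avoids the branch points of $\alpha$ (where $a\cdot a$, $b\cdot b$, or the wedge vanishes), so every term is well defined.

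It then remains to evaluate the two building blocks. For a sub-angle enclosing no null ray (both endpoints in one sector), \Cref{prop:caba} gives $\theta=-i\theta_{\text{boost}}$ with $\theta_{\text{boost}}$ real, hence $\Re\theta=0$; equivalently a direct computation gives $\alpha=e^{\theta_{\text{boost}}}>0$. For a sub-angle enclosing exactly one null ray, the endpoints are necessarily one spacelike and one timelike, since adjacent sectors alternate in causal character. Plugging a spacelike $a$ and a timelike $b$ into (\ref{eq:la}) and using that, by the branch (\ref{eq:bcc}), $\sqrt{b\cdot b-0i}=-i\sqrt{|b\cdot b|}$ while $\sqrt{a\cdot a-0i}=\sqrt{a\cdot a}>0$, I expect to find $\alpha=i\,e^{\delta}$ for a real boost parameter $\delta$, so that $\arg\alpha=\pi/2$ irrespective of the precise positions. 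Summing $N$ such contributions yields $\Re\theta=N\pi/2$, and since a convex angle encloses at most two of the four null rays, $N\le 2$ and the total stays within the principal range $(-\pi,\pi]$, so no $2\pi$ ambiguity arises.

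The main obstacle I anticipate is not the single-crossing computation itself but the bookkeeping that makes the decomposition airtight: justifying that the additive splitting of \Cref{prop:laa} applies with the intermediate vectors chosen inside the convex region, confirming that a convex angle encloses at most two null rays (so $N\in\{0,1,2\}$ and the count of sub-angles equals the count of enclosed rays), and---most importantly---checking that the branch prescription (\ref{eq:bcc}) pins every single-crossing contribution to $+\pi/2$ rather than $-\pi/2$, so that the terms add with a consistent sign rather than partially cancelling. This sign consistency is precisely what the $-0i$ prescription is engineered to guarantee, so I would make that dependence explicit.
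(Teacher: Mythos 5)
Your building blocks coincide with the paper's: the $N=0$ case via \Cref{prop:caba} (same-sector pairs have $\alpha=e^{\theta_{\text{boost}}}>0$, so $\Re\theta=0$), and the single-crossing case via what the paper isolates as \Cref{lm:aste} (for a spacelike--timelike pair, $(a\cdot a)(b\cdot b)<0$ forces $a\wedge b>\abs{a\cdot b}$, so the numerator of $\alpha$ is positive while the $-0i$ branch (\ref{eq:bcc}) makes the denominator negative imaginary, hence $\alpha$ positive imaginary and $\Re\theta=+\pi/2$). However, there is a genuine circularity in your structural step: you invoke \Cref{prop:laa} (additivity of $\theta$) to glue the sub-angles, but in the paper the proof of \Cref{prop:laa} itself \emph{uses} \Cref{prop:rtheta} --- additivity of the logarithms holds only up to an integer multiple of $2\pi$, and it is precisely the constraint $\Re\theta\in\{0,\pi/2,\pi\}$ from \Cref{prop:rtheta} that rules out a nonzero multiple. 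So as written, your argument assumes the statement it is proving.

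The repair is easy and stays within your strategy: work multiplicatively with $\alpha$ instead of additively with $\theta$. The identity $\alpha(a,b)\,\alpha(b,c)=\alpha(a,c)$ is Sorkin's and is independent of \Cref{prop:rtheta} (write $b=\lambda a+\mu c$ with $\lambda,\mu\ge 0$ and eliminate $b$; note $\sqrt{b\cdot b-0i}\,\sqrt{b\cdot b-0i}=b\cdot b$ whatever the branch, so the middle denominators combine cleanly). Your case computations give $\arg\alpha=0$ for each same-sector factor and $\arg\alpha=\pi/2$ for each single-crossing factor; since a convex angle encloses at most two null rays, the sum of arguments lies in $\{0,\pi/2,\pi\}\subset(-\pi,\pi]$, so the argument of the product equals the sum with no mod-$2\pi$ wraparound, and $\Re\theta=\arg\alpha=N\pi/2$ follows. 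For comparison, the paper avoids any decomposition altogether: it handles the two-crossing cases directly, using the Lorentzian triangle inequalities (\ref{eq:lgti}) to show $\alpha<0$ there (for spacelike $a,e$ in different quadrants, $a\cdot e<0$ and $a\wedge e<\abs{a\cdot e}$ since $(a\cdot a)(e\cdot e)>0$; for timelike $d,f$, $d\cdot f>0$ with a negative real denominator), giving $\Re\theta=\pi$ without appeal to additivity. Your route, once de-circularized as above, buys a cleaner conceptual picture (each ray crossing contributes $\pi/2$); the paper's route buys logical economy, since it needs no splitting lemma and in fact supplies the input that makes additivity well defined.
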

\begin{proposition}\label{prop:ltri}
The angles of a Lorentzian triangle sum to $2\pi$.
\end{proposition}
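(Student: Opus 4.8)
The plan is to prove the claim by decomposing each of the three convex interior angles into its real and imaginary parts, summing the two contributions separately, and reading off their geometric meaning from the preceding propositions. Concretely, I would write the interior angle at vertex $k$ as $\theta_k=\tfrac{N_k\pi}{2}-i\eta_k$ for $k=1,2,3$, where $N_k\in\{0,1,2\}$ is the number of light rays enclosed in that angle (\cref{prop:rtheta}) and $\eta_k$ is the associated boost rapidity in the sign convention of \cref{prop:caba}. By the additivity of \cref{prop:laa} each angle splits into a sequence of light-ray reflections, each contributing $\pi/2$ to the real part by \cref{prop:casl}, together with a residual pure boost carrying the imaginary part $-i\eta_k$. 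The angle sum then separates as $\sum_k\theta_k=\big(\sum_k N_k\big)\tfrac{\pi}{2}-i\sum_k\eta_k$, and the two pieces can be handled independently.

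Next I would dispose of the imaginary part by showing $\sum_k\eta_k=0$. The three directed edges form a closed loop, so the edge-to-edge Lorentz transformations realized at the three vertices compose to the identity; on the continuous boost factor this composition is additive in the rapidity, and the requirement of returning to the starting edge after one circuit forces the signed rapidities to cancel. I expect this step to be essentially routine, provided the boost orientations are bookkept consistently with the orientation of the traversal, so that the supplementary relations between the interior angles and the turning (exterior) angles are applied with the correct signs.

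The main obstacle is the real part: establishing $\sum_k N_k=4$, equivalently $\Re\sum_k\theta_k=2\pi$. The route I would take is to translate the three interior wedges to a common vertex, which is harmless in the flat Minkowski plane, and to argue that together they sweep the directions around that point exactly once; the enclosed light rays then total the four light rays encountered in a full circuit, and \cref{prop:fp2p} delivers $2\pi$. Making this precise is delicate. It requires a case analysis over the causal characters of the three edges (spacelike versus timelike) and over the placement of the light cones at the three vertices, and one must control the boundary configurations in which an edge degenerates toward a null direction so as to avoid over- or under-counting the enclosed rays. This counting is exactly the point at which the Lorentzian answer departs from the Euclidean value $\pi$, and I expect nearly all of the real work to reside here.

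With both pieces in hand the proposition follows immediately: the imaginary contributions cancel and the real contributions assemble, giving $\sum_k\theta_k=2\pi$. As a consistency check along the way I would confirm that the branch choices of \cref{eq:la} used for the Lorentzian angles are the ones that make the decomposition $\theta_k=\tfrac{N_k\pi}{2}-i\eta_k$ hold with the stated signs, and I would cross-check the final normalization against the complex-angle identity of \cref{th:sta} applied to the same triangle.
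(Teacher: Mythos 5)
Your proposal fails at exactly the step you flagged as the main obstacle, and it fails in a way your own planned consistency check would have exposed. The counting $\sum_k N_k=4$ is false: when the three interior wedges of a triangle are translated to a common vertex they tile a \emph{half}-plane, not a full circle (this is the standard Euclidean parallel-line argument, where the angle sum is $\pi$, i.e.\ a half-turn; exterior/turning angles are what sum to a full turn). A half-plane whose boundary line is not null contains exactly two of the four null rays through the point, so $\sum_k N_k=2$ and $\Re\sum_k\theta_k=\pi$, not $2\pi$. Moreover \cref{th:sta} makes $2\pi$ outright impossible: the identity $\alpha\,\alpha_1\alpha_2=-1$ survives the Lorentzian branch choice (\ref{eq:la}), because each edge's square root $\sqrt{\sigma_e-0i}$ appears in exactly two of the three denominators, so $(\sqrt{\sigma_e-0i})^2=\sigma_e$ regardless of branch; hence the angle sum must be an odd multiple of $\pi$, and $2\pi$ fails your own cross-check against \cref{th:sta}. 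A concrete counterexample to your claim: the triangle with vertices $(0,0)$, $(1,0)$, $(0,2)$ in $(x,t)$ coordinates, i.e.\ $\sigma_{AB}=1$, $\sigma_{AC}=-4$, $\sigma_{BC}=-3$, has angles
\begin{align*}
\theta_A=\frac{\pi}{2},\qquad \theta_B=\frac{\pi}{2}-i\log\sqrt{3},\qquad \theta_C=i\log\sqrt{3},
\end{align*}
which sum to $\pi$.

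For the record, the paper's own proof is the two-line argument you should compare against: the three convex angles of a Lorentzian triangle enclose two light rays in total, so by \cref{prop:rtheta} the real part of the sum is $\pi$, and by \cref{th:sta} the imaginary part vanishes. The stated value ``$2\pi$'' in \cref{prop:ltri} is evidently a typo in the paper itself --- the proof derives $\pi$, and the proof of \cref{th:cgb} later invokes \cref{prop:ltri} precisely as ``the angles of a triangle sum to $\pi$'' (this is also what the Gauss--Bonnet bookkeeping $N=T$ requires). Your imaginary-part argument via composing the three edge-to-edge Lorentz transformations around the loop is sound in spirit, but it is superseded by \cref{th:sta}, which delivers the cancellation without any orientation bookkeeping; the real work in this proposition is the light-ray count, and there the correct answer is two, not four.
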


These results are easier to derive after working through some examples. These also serve to help readers unfamiliar with Lorentzian angles \cite{SorkinLorentzianVectors} to build some intuitions. 

In the Minkowski plane, a convex angle can bound $N=0,1,$ or $2$ light rays (\cref{fig:lpv}). According to whether the vectors bounding the angle are timelike or spacelike (for reasons mentioned below all the examples, we do not consider lightlike edges here), there are five cases in total. We consider them in turn.
\begin{example}[Spacelike edges within the same quadrant]
Consider spacelike edges $a$ and $b$ forming a triangle with squared lengths $\sigma_a=1, \sigma_b=3/4, \sigma_{ab}=-1/4$, where $\sigma_{ab}$ is the squared length for the third edge (\Cref{fig:lpv}). The complex angle $\theta$ bounded by $a$ and $b$ can be calculated using (\ref{eq:adotb}) to (\ref{eq:awedgeb}) as follows.
\begin{align}
a\cdot b=& \frac{1}{2}(\sigma_{a}+\sigma_{b}-\sigma_{ab})=1,
\\
a\cdot a=&\sigma_a=1, \quad  b\cdot b=\sigma_b=3/4,
\\
a\wedge b=&\sqrt{(a\cdot b)^2-(a\cdot a)(b\cdot b)}=1/2,
\\
\theta =& -i\log (\frac{a\cdot b+a\wedge b}{\sqrt{a\cdot a-0i}\sqrt{b\cdot b-0i}}) \nonumber
\\
=& -i\log\sqrt{3}.
\end{align}
\qed
\end{example}
The above calculation is based on the invariant quantity of the squared length and does not invoke any coordinate system. Alternatively, one could introduce a coordinate system in the Minkowski plane, represent $a$ and $b$ as vectors there, and use the Minkowski inner product for $\cdot$ to calculate $\theta$. For instance, one could choose $a=(1,0)$ and $b=(1,1/2)$ in the coordinate convention $(x,t)$. Then again $a\cdot b= 1^2-0=1$, $a\cdot a= 1^2-0^2=1$, and $b\cdot b=1^2-(1/2)^2=3/4$, so one will get the same result for $\theta$.

\begin{figure}
    \centering
    \includegraphics[width=.4\textwidth]{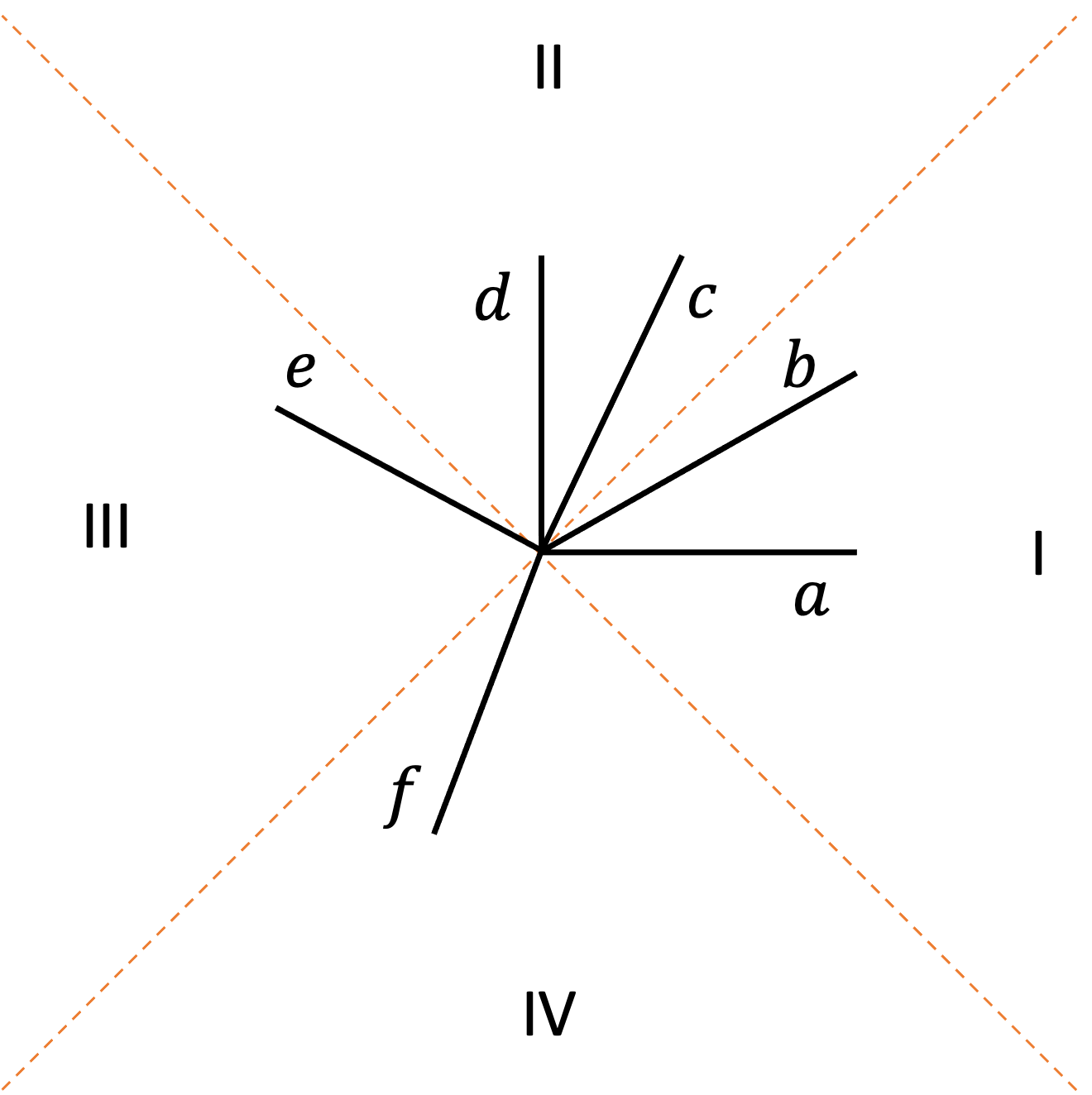}
    \caption{The Minkowski plane with four quadrants bounded by dashed light rays. The edges $a$ to $f$ are distributed in different quadrants.}
    \label{fig:lpv}
\end{figure}

The complex angle $\theta$ is related to the boost angle of Lorentz transformations. The boost angle from $a$ to $b$ is, up to a choice of sign,
\begin{align}\label{eq:ba}
\theta_{\text{boost}}=\cosh^{-1}(\hat{a}\cdot\hat{b}).
\end{align}
Here $\hat{x}:=x/\sqrt{\abs{x\cdot x}}$ denotes the normalized vector for $x$. For the vectors $a=(1,0)$ and $b=(1,1/2)$, $\hat{a}=(1,0)$ and $\hat{b}=(2/\sqrt{3},1/\sqrt{3})$. Therefore $\theta_{\text{boost}}=\cosh^{-1}(2/\sqrt{3})=\log\sqrt{3}$, where we used the elementary identity
\begin{align}\label{eq:chlog}
\cosh^{-1} z = \log(z+\sqrt{z^2-1}).
\end{align}
In this case for two spacelike edges, upon choosing the boost angle to be positive, we see that $\theta = -i \theta_{\text{boost}}.$ Using (\ref{eq:ba}) and (\ref{eq:chlog}), it is easy to check that this relation holds for all pairs of spacelike edges in the same quadrant in the Minkowski plane. We will see next that this relation also holds for timelike edges.

\begin{example}[Timelike edges within the same quadrant]
Consider timelike edges $c$ and $d$ forming a triangle with squared lengths $\sigma_c=-3/4, \sigma_d=-1, \sigma_{cd}=1/4$, where $\sigma_{cd}$ is the squared length for the third edge (\Cref{fig:lpv}). The complex angle $\theta$ bounded by $c$ and $d$ can be calculated using (\ref{eq:adotb}) to (\ref{eq:awedgeb}) as follows.
\begin{align}
c\cdot d=& \frac{1}{2}(\sigma_{c}+\sigma_{d}-\sigma_{cd})=-1,
\\
c\cdot c=&\sigma_c=-3/4, \quad  d\cdot d=\sigma_b=-1,
\\
c\wedge d=&\sqrt{(c\cdot d)^2-(c\cdot c)(d\cdot d)}=1/2,
\\
\theta =& -i\log (\frac{c\cdot d+c\wedge d}{\sqrt{c\cdot c-0i}\sqrt{d\cdot d-0i}}) \nonumber
\\
=& -i\log\frac{-1+1/2}{(-i\sqrt{3/4})(-i\sqrt{1})}=-i\log(1/\sqrt{3}).
\end{align}
\qed
\end{example}
Alternatively, setting $c=(1/2,1)$ and $d=(0,1)$ in a coordinate system $(x,t)$ and performing the calculation there leads to the same $\theta$.

Note that $c$ and $b$, as well as $d$ and $a$ are related by reflection with respect to the light ray separating quadrant I and II. The same Lorentz boost transformation that maps $a$ to $b$ will map $d$ to $c$. The boost angle from $a$ to $b$ is anti-clockwise, while that from $d$ to $c$ is clockwise. Since we chose the boost angle from $a$ to $b$ to be positive, it is reasonable to choose the boost angle from $d$ to $c$ to be negative. In this case we have
\begin{align}\label{eq:ba2}
\theta_{\text{boost}}=-\cosh^{-1}(|\hat{c}\cdot\hat{d}|)=-\cosh^{-1}(-\hat{c}\cdot\hat{d}),
\end{align}
since for timelike vectors in the same quadrant $\hat{c}\cdot\hat{d}<0$, and the normalized vectors take the form $\hat{x}:=x/\sqrt{\abs{x\cdot x}}=x/\sqrt{-x\cdot x}$.
From this we obtain $\hat{c}=(1/\sqrt{3},2/\sqrt{3})$ and $\hat{d}=(0,1)$, so $\theta_{\text{boost}}=-\cosh^{-1}(2/\sqrt{3})=\log(1/\sqrt{3})$. 

Again, $\theta = -i \theta_{\text{boost}}$. Using (\ref{eq:ba2}) and (\ref{eq:chlog}), it is not hard to check that actually this relation holds for all pairs of timelike edges in the same quadrant in the Minkowski plane. Since boost angles exist only between two spacelike vectors in the same quadrant and two timelike vectors in the same quadrant, we have 
proved \Cref{prop:caba}.

\begin{example}[A spacelike edge and a time like edge]
Consider the spacelike edge $a$ and timelike edge $c$ forming a triangle with squared lengths $\sigma_a=1, \sigma_c=-3/4, \sigma_{ac}=-3/4$, where $\sigma_{ac}$ is the squared length for the third edge (\Cref{fig:lpv}). The complex angle $\theta$ bounded by $a$ and $c$ can be calculated using (\ref{eq:adotb}) to (\ref{eq:awedgeb}) as follows.
\begin{align}
a\cdot c=& \frac{1}{2}(\sigma_{a}+\sigma_{c}-\sigma_{ac})=1/2,
\\
a\cdot a=&\sigma_a=1, \quad  c\cdot c=\sigma_c=-3/4,
\\
a\wedge c=&\sqrt{(a\cdot c)^2-(a\cdot a)(c\cdot c)}=1,
\\
\theta =& -i\log (\frac{a\cdot c+a\wedge c}{\sqrt{a\cdot a-0i}\sqrt{c\cdot c-0i}}) \nonumber
\\
=& -i\log\frac{1/2+1}{(\sqrt{1})(-i\sqrt{3/4})}=-i\log(i\sqrt{3})=-i\log\sqrt{3}+\pi/2.
\end{align}
\qed
\end{example}
Alternatively, setting $a=(1,0)$ and $c=(1/2,1)$ in a coordinate system $(x,t)$ and performing the calculation there leads to the same $\theta$.

Note the relevance of the choice of branch for the square root. Had we chosen the branch without $-0i$, the denominator would be $i\sqrt{3/4}$ instead, and the real part of $\theta$ would be $- \pi/2$. In the choice with $-0i$, we have:
\begin{lemma}\label{lm:aste}
The angle $\theta$ between a spacelike edge and a timelike edge obeys
\begin{align}
\Re\theta = \pi/2.
\end{align}
\end{lemma}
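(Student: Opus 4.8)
The plan is to evaluate the defining formula (\ref{eq:la}) directly, tracking how the branch conventions (\ref{eq:sqrtb})--(\ref{eq:bcc}) act on each factor when one edge is spacelike and the other timelike. I label the spacelike edge $a$, so that $a\cdot a = \sigma_a > 0$, and the timelike edge $b$, so that $b\cdot b = \sigma_b < 0$. The goal then reduces to showing that $\alpha$ in (\ref{eq:la}) is positive imaginary: for then $\Log\alpha = \log|\alpha| + i\pi/2$ by (\ref{eq:lpb}), and $\theta = -i\Log\alpha = \pi/2 - i\log|\alpha|$ has real part exactly $\pi/2$.

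First I would handle the denominator. Since $\sigma_a > 0$, the factor $\sqrt{a\cdot a - 0i}$ is simply the positive real root $\sqrt{\sigma_a}$. Since $\sigma_b < 0$, the prescription (\ref{eq:bcc}) places $\sigma_b$ on the lower side of the branch cut, so $\sqrt{b\cdot b - 0i} = -i\sqrt{-\sigma_b}$ is negative imaginary. Hence the denominator equals $-i\sqrt{\sigma_a}\sqrt{-\sigma_b}$, that is, $-i$ times a positive real number.

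Next I would show the numerator $a\cdot b + \sqrt{(a\cdot b)^2 - (a\cdot a)(b\cdot b)}$ is a positive real number. Its radicand is $(a\cdot b)^2 - \sigma_a\sigma_b$; because $\sigma_a > 0$ and $\sigma_b < 0$, the subtracted product $\sigma_a\sigma_b$ is negative, so the radicand strictly exceeds $(a\cdot b)^2$ and is positive. Under the principal branch (\ref{eq:sqrtb}) the square root is therefore real, positive, and strictly larger in magnitude than $|a\cdot b|$, which forces the sum to be strictly positive irrespective of the sign of $a\cdot b$. Combining the two steps, $\alpha$ equals a positive real divided by $-i$ times a positive real, i.e. $\alpha = i\cdot(\text{positive real})$ is positive imaginary, which completes the argument.

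The delicate step — the one worth emphasizing rather than the algebra — is the branch bookkeeping for the timelike factor: the whole conclusion hinges on the $-0i$ prescription in (\ref{eq:bcc}), which selects the negative imaginary root and hence $+\pi/2$ rather than $-\pi/2$. As the preceding example already observes, dropping the $-0i$ would flip the sign of $\Re\theta$. I would therefore state explicitly where (\ref{eq:bcc}) is invoked for the denominator and confirm that it is applied consistently alongside the principal branch (\ref{eq:sqrtb}) used for the numerator wedge term, whose radicand is manifestly positive so that no cut ambiguity arises there.
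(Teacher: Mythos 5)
Your proof is correct and follows essentially the same route as the paper's: show the numerator $a\cdot b + a\wedge b$ is positive real (because the radicand exceeds $(a\cdot b)^2$ when $\sigma_a\sigma_b<0$), show the denominator is negative imaginary via the $-0i$ prescription of (\ref{eq:bcc}), conclude $\alpha$ is positive imaginary, and read off $\Re\theta=\pi/2$ from the principal branch of $\Log$. Your additional emphasis on exactly where the branch conventions enter is a faithful elaboration of the paper's argument, not a different one.
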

\begin{proof}
Without loss of generality let $\sigma_a>0$ and $\sigma_c<0$. Then $(a\cdot a)(c\cdot c)<0$, so $a\wedge c=\sqrt{(a\cdot c)^2-(a\cdot a)(c\cdot c)}>\abs{a\cdot c}$. Therefore the numerator of $\alpha$, $a\cdot c+a\wedge c$, is positive. The denominator $\sqrt{a\cdot a-0i}\sqrt{c\cdot c-0i}$ is negative imaginary. Therefore $\alpha$ is positive imaginary. It follows that
$\theta=-i\log(i r)=-i\log r+\pi/2$ for some $r>0$.
\end{proof}

For the special case of two edges $a$ and $c$ related by a reflection across a light ray as the reflection axis, the angle bounded by them equals $\theta = \pi/2$. This is the content of \Cref{prop:casl}, which is proved by noting that $\sigma_a=-\sigma_c$ and $\sigma_{ac}=0$. From these we derive that $a\cdot c = 0$, $a\wedge c= \sqrt{\sigma_a^2}$, whence $\alpha=\sqrt{\sigma_a^2}/(\sqrt{\sigma_a-0i}\sqrt{-\sigma_a-0i})=i$. Therefore $\theta=-i\log\alpha=\pi/2$. 

This should be expected. The boost angles from $a$ and $c$ to the light ray are equal in magnitude and opposite in sign. When added up to obtain $\Im\theta$ according to \Cref{prop:caba}, they cancel. By \Cref{prop:rtheta}, $\Re\theta=\pi/2$ because travelling from $a$ to $c$ crosses one light ray. 

\Cref{prop:casl} implies that the in the flat Minkowski plane the angles around around a point sum to $2\pi$, which is the content of \Cref{prop:fp2p}. Consider four edges right in the middle of the four quadrants. According to \cref{prop:casl}, the four angles formed by them all equal $\pi/2$, so they sum to $2\pi$.

\begin{example}[Spacelike edges in different quadrants]
Consider two spacelike edges $a$ and $e$ in different quadrants forming a triangle with squared lengths $\sigma_a=1, \sigma_e=3/4, \sigma_{ae}=15/4$, where $\sigma_{ae}$ is the squared length for the third edge (\Cref{fig:lpv}). The complex angle $\theta$ bounded by $a$ and $e$ can be calculated using (\ref{eq:adotb}) to (\ref{eq:awedgeb}) as follows.
\begin{align}
a\cdot e=& \frac{1}{2}(\sigma_{a}+\sigma_{e}-\sigma_{ae})=-1,
\\
a\cdot a=&\sigma_a=1, \quad  e\cdot e=\sigma_e=3/4,
\\
a\wedge e=&\sqrt{(a\cdot e)^2-(a\cdot a)(e\cdot e)}=1/2,
\\
\theta =& -i\log (\frac{a\cdot e+a\wedge e}{\sqrt{a\cdot a-0i}\sqrt{e\cdot e-0i}}) \nonumber
\\
=& -i\log(-1/\sqrt{3})=-i\log(1/\sqrt{3})+\pi.
\end{align}
\qed
\end{example}
Again, the readers can check that the vectors $a=(1,0)$ and $e=(-1,1/2)$ leads to the same $\theta$. 

Note the relevance of the choice of branch for the log function. The principal branch which we chose yields $\Re\theta = \pi$ for $\alpha<0$. A different choice could result in $\Re\theta = -\pi$. Given the branch choices for the square roots, only for the principal branch can the angles possibly be additive. To see this, note that by \Cref{lm:aste}, each light ray crossing accrues $\pi/2$ for $\Re\theta$. Since from $a$ to $e$ there are two light rays crossed, $\Re\theta$ needs to be $\pi$ if the angles are additive. 

\begin{example}[Timelike edges in different quadrants]
Consider two timelike edges $d$ and $f$ in different quadrants forming a triangle with squared lengths $\sigma_d=-1, \sigma_f=-3/4, \sigma_{df}=-15/4$, where $\sigma_{df}$ is the squared length for the third edge (\Cref{fig:lpv}). The complex angle $\theta$ bounded by $d$ and $f$ can be calculated using (\ref{eq:adotb}) to (\ref{eq:awedgeb}) as follows.
\begin{align}
d\cdot f=& \frac{1}{2}(\sigma_{d}+\sigma_{f}-\sigma_{df})=1,
\\
d\cdot d=&\sigma_d=-1, \quad  f\cdot f=\sigma_f=-3/4,
\\
d\wedge f=&\sqrt{(d\cdot f)^2-(d\cdot d)(f\cdot f)}=1/2,
\\
\theta =& -i\log (\frac{d\cdot f+d\wedge f}{\sqrt{d\cdot d-0i}\sqrt{f\cdot f-0i}}) \nonumber
\\
=& -i\log\frac{1+1/2}{(-i)(-i\sqrt{3/4})}=-i\log(-\sqrt{3})=-i\log(\sqrt{3})+\pi.
\end{align}
\qed
\end{example}
Alternatively, setting $d=(0,1)$ and $f=(-1/2,-1)$ in a coordinate system $(x,t)$ and performing the calculation there leads to the same $\theta$.

In the above two cases $\Re \theta = \pi$. It actually holds in general that crossing two light rays makes the angle accrue a real part of $\pi$. The reason is that the log argument is negative for two light ray crossings, which yields $\Re \theta = \pi$. To see that the log argument is negative, note that for two spacelike vectors $a$ and $e$ in different quadrants, $a\cdot e= \frac{1}{2}(\sigma_{a}+\sigma_{e}-\sigma_{ae})<0$ as a consequence of the Lorentzian triangle inequality (\ref{eq:lgti}). Therefore the log argument $\frac{a\cdot e+a\wedge e}{\sqrt{a\cdot a-0i}\sqrt{e\cdot e-0i}}<0$. For two timelike vectors $d$ and $f$ in different quadrants, $d\cdot f= \frac{1}{2}(\sigma_{d}+\sigma_{f}-\sigma_{df})>0$ as a consequence of the Lorentzian triangle inequality (\ref{eq:lgti}). In addition, $d\wedge f=\sqrt{(d\cdot f)^2(d\cdot d)(f\cdot f)}<\abs{d\cdot f}$. Therefore the log argument $\frac{d\cdot f+d\wedge f}{\sqrt{d\cdot d-0i}\sqrt{f\cdot f-0i}}<0$.

Since a convex angle in the Minkowski plane can only enclose $0,1$ or $2$ light rays, we have proved \Cref{prop:rtheta}. For any triangle in the Minkowski plane, the three angles enclose two light rays in total. Therefore the sum of the three angles have $\pi$ as the real part. By \cref{th:sta}, the imaginary part vanishes. This proves \cref{prop:ltri}.

Finally, we want to prove \cref{prop:laa}, i.e., 
\begin{align}
\theta(a,c)=\theta(a,b)+\theta(b,c),
\end{align}
where $b$ lies between $a$ and $c$ in the Minkowski plane, and $\theta(x,y)=-i\log \alpha(x,y)$ denotes the convex angle defined by some vectors $x$ and $y$ according to (\ref{eq:la}). The first part of the proof is the same as Sorkin's proof for his equation (3) in \cite{SorkinLorentzianVectors}. Explicitly, since the angles are convex and $b$ lies in between $a$ and $c$, one could write $b=\alpha a +\beta c$ with $\alpha, \beta\ge 0$. This can be plugged in
\begin{align}
(\frac{a\cdot b+a\wedge b}{\sqrt{a\cdot a}\sqrt{b\cdot b}})( \frac{b\cdot c+b\wedge c}{\sqrt{b\cdot b}\sqrt{c\cdot c}})= \frac{a\cdot c+a\wedge c}{\sqrt{a\cdot a}\sqrt{c\cdot c}},
\end{align}
i.e., $\alpha(a,b)\alpha(b,c)=\alpha(a,c)$, to eliminate $b$ and establish the identity.

For the complex log function, $\theta(a,b)+\theta(b,c)=-i\log \alpha(a,b)-i\log \alpha(b,c)=-i \log (\alpha(a,b)\alpha(b,c))= -i\log \alpha(a,c)=\theta(a,c)$ up to an integer multiple of $2\pi$. However, by \Cref{prop:rtheta} and the assumption that all three angles are convex, the real part of the left hand side can only be $0,\pi/2,$ or $\pi$. The same holds for the right hand side. Therefore the multiple of $2\pi$ has to be zero, and we established $\theta(a,b)+\theta(b,c)=\theta(a,c)$.

\subsection*{Lightlike edges}

When one or two of the edges that bound the angle are lightlike, the Lorentzian angle defined in (\ref{eq:la}) could diverge. In \cite{SorkinLorentzianVectors}, special care is taken to redefine such angles. 

We will not perform any redefinition for angles with lightlike edges in this work, because the main focus is on the quantum theory. In the path integral, squared lengths is integrated over for each edge. Zero (lightlike) squared length is of measure zero, and a special redefinition just on this measure zero set is not necessary. 
See \cref{sec:cb} for additional discussions on the (ir)relevance of lightlike edges for the gravitational path integral.


\subsection{Dihedral angles}\label{sec:da}

\begin{figure}
    \centering
    \includegraphics[width=.5\textwidth]{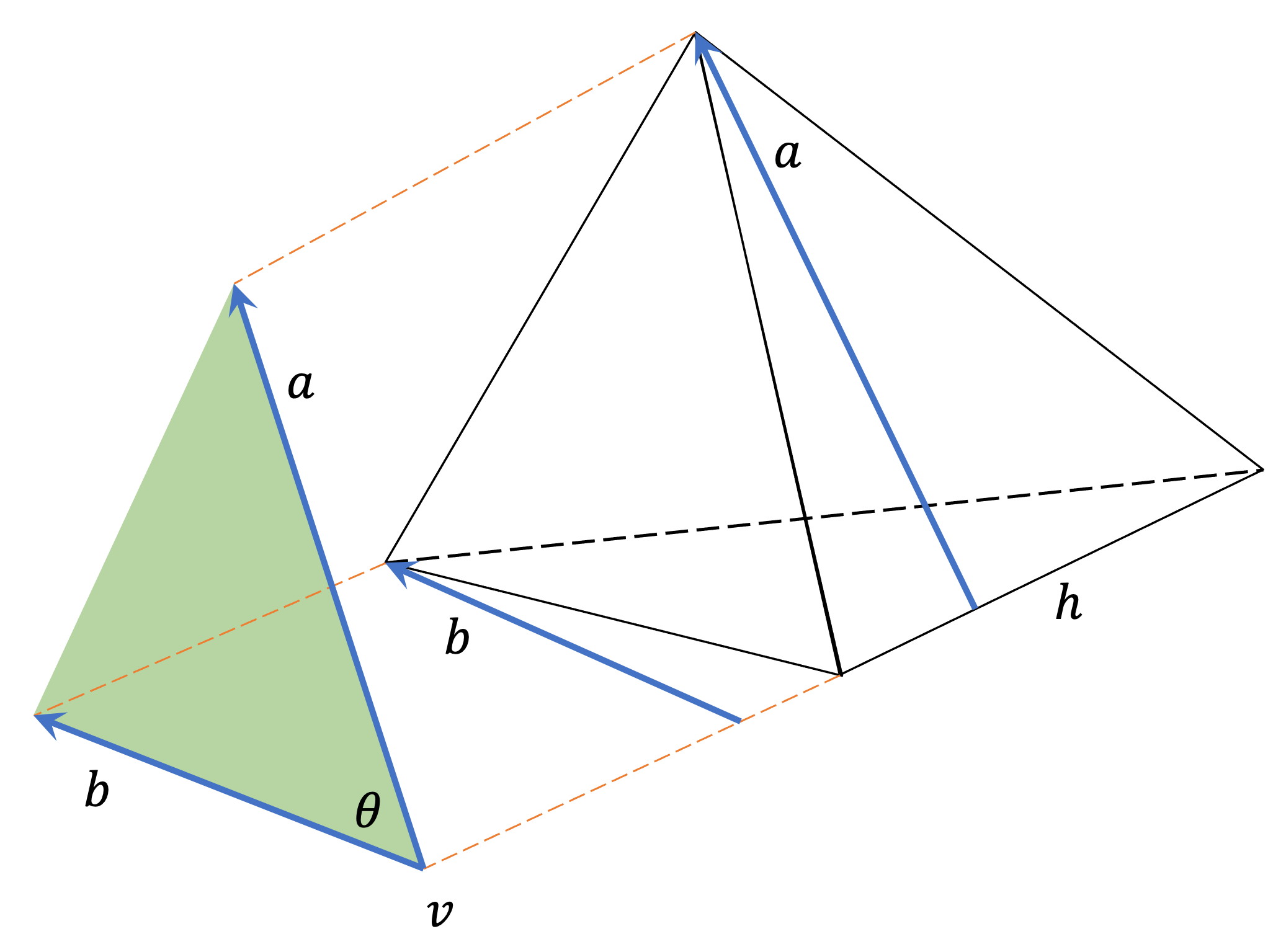}
    \caption{In $3D$, the tetrahedron simplex $s$ projects into the shaded triangle orthogonal to the hinge edge $h$. The dihedral angle $\theta_{s,h}$ projects to the triangle angle $\theta$. The faces bounding the dihedral angle project to the edges $a$ and $b$ of the triangle.}
    \label{fig:da}
\end{figure} 

In simplicial gravity, curvature is captured by deficit angles, which is in turn defined in terms of dihedral angles. 

A dihedral angles is formed by two codimension-$1$ faces at a hinge, which is a codimension-$2$ simplex. For instance in $2D$, the dihedral angle $\theta_{s,h}$ in triangle $s$ at vertex $h$ is the angle formed by the two edges sharing $h$. In $3D$ the dihedral angle $\theta_{s,h}$ in tetrahedron $s$ at edge $h$ is the angle formed by the two triangles sharing $h$. In $4D$ the dihedral angle $\theta_{s,h}$ in $4$-simplex $s$ at triangle $h$ is the angle formed by the two tetrahedrons sharing $h$ etc.

As illustrated in \Cref{fig:da}, dihedral angles can be obtained by projecting $s$ to the triangle orthogonal to $h$, and extracting the triangle angle at the vertex that $h$ projects to. Using (\ref{eq:alpha}), namely
\begin{align}
\theta =& -i\log \alpha,\quad
\alpha=\frac{a\cdot b+\sqrt{(a\cdot b)^2-(a\cdot a)(b\cdot b)}}{\sqrt{a\cdot a}\sqrt{b\cdot b}},
\end{align}
the dihedral angle can be computed from $a\cdot b$, $a\cdot a$, and $b\cdot b$ of the projected triangle. However, in simplicial gravity the input data are the squared distances $\sigma_e$ on the simplicial edges $e$. We need to express $a\cdot b$, $a\cdot a$, and $b\cdot b$ in terms $\sigma_e$.


\subsection*{Volume forms}

To express $a\cdot b$, $a\cdot a$, and $b\cdot b$ in terms $\sigma_e$, it is useful to introduce a volume form representation of the (sub)simplices \cite{Hartle1985SimplicialDiscussion}. An $n$-simplex has $n+1$ vertices. With one of the vertices labelled as $0$, the $n$ vectors $e_i, i=1,\dots, n$ starting from $0$ and pointing to the other $n$ vertices characterize the simplex (\cref{fig:spl1}). 

In \cref{sec:vol} we treated $e_i$ as the basis vectors in defining the metric $g_{ij}$ which equals $e_i\cdot e_j$. Let $e^i$ be the dual vectors so that $e^i(e_j)=\delta^i_j$. A $d$-simplex $s$ can represented by the $d$-form
\begin{align}
\omega_s = e^1\wedge \cdots \wedge e^d.
\end{align}
Then an $n$-dimensional subsimplex $r$ with edge vectors $e_{r_1},\dots, e_{r_n}$ can be represented by the $n$-form
\begin{align}
\omega_r = e^{r_1}\wedge \cdots  \wedge e^{r_n}.
\end{align}
The ordering of the indices $r_i$ decides an orientations for the (sub)simplex.

The dot product of two $n$-forms is given by
\begin{align}\label{eq:fdot}
\omega_r\cdot \omega_t = (\frac{1}{n!})^2 \det(e_{r_i}\cdot e_{t_j}). 
\end{align}
\cref{eq:fdot} conforms to the standard definition of inner products for $n$-forms. One can check that if $e_{r_i}=e_{r_j}$ for any $i\ne j$, or if $e_{t_i}=e_{t_j}$ for any $i\ne j$, then $\omega_r\cdot \omega_t=0$, which should hold for forms. 
By the definition (\ref{eq:svol1}) of the squared volume,
\begin{align}\label{eq:fv2}
\omega_r\cdot \omega_r = (\frac{1}{n!})^2 \det(e_{r_i}\cdot e_{r_j}) = \sV_r.
\end{align}

\subsection*{Vector dot products}

The form representation can be used to express $a\cdot b$, $a\cdot a$, and $b\cdot b$ for the dihedral angle in terms $\sigma_e$. Let $\omega_h$ be the $d-2$-form of the hinge $h$, and let
\begin{align}
\omega_{a} = \omega_h\wedge e, \quad \omega_{b} = \omega_h\wedge e'
\end{align}
be the $d-1$-forms of the faces of $a$ and $b$ (one might change the order between $\omega_h$ and $e$ ($e'$) if a different orientation is suitable). The edge vector $e$ can be written as $e=a+e_{\parallel}$, where $a$ is orthogonal to $h$ and $e_{\parallel}$ is parallel to $h$. Similarly $e'=b+e_{\parallel}'$. Since $e_{\parallel}$ and $e_{\parallel}'$ are parallel to $h$, it follows from the properties of forms that $\omega_{a} = \omega_h\wedge a$ and $\omega_{b} = \omega_h\wedge b$. Therefore
\begin{align}
\omega_{a} \cdot \omega_{b} =& (\omega_h\wedge a) \cdot (\omega_h\wedge b)
\\=& \frac{\omega_{h} \cdot \omega_{h}}{(d-1)^2} ~ a\cdot b.
\end{align}
In the second line we used the definition (\ref{eq:fdot}) and noted that since $a$ and $b$ are orthogonal to $h$, $a\cdot e=b\cdot e=0$ for any $e$ of $h$. 

Therefore
\begin{align}\label{eq:adbf}
a\cdot b = (d-1)^2 ~\frac{\omega_{a} \cdot \omega_{b}}{\omega_{h} \cdot \omega_{h}}.
\end{align}
The other terms $a\cdot a$ and $b\cdot b$ can be obtained by setting $a=b$. The numerator of (\ref{eq:adbf}) can be expressed in squared lengths using
\begin{align}
\omega_{a} \cdot \omega_{b} =& \frac{1}{(d-1)!^2} \det(e_{a_i}\cdot e_{b_j})
\label{eq:ipab1}
\\=& \frac{1}{(d-1)!^2} \det(\frac{1}{2}(\sigma_{0 a_i}+\sigma_{0 b_j}-\sigma_{a_i b_j})),
\label{eq:ipab2}
\end{align}
where (\ref{eq:fdot}) and (\ref{eq:edots}) are used. Here $a_i$ is the $i$-th vertex of the subsimplex $a$, and $b_j$ is the $j$-th vertex of the subsimplex $b$. The vertex $0$ is the one fixed when specifying the $d$-simplex $s$, and the squared lengths $\sigma_{0 a_i}, \sigma_{0 b_j}, \sigma_{a_i b_j}$ are inputs to simplicial gravity. According to (\ref{eq:fv2}), the denominator $\omega_{h} \cdot \omega_{h}$ of (\ref{eq:adbf}) simply equals $\sV_h$, which is a function of squared lengths by definition (\ref{eq:svol1}) or (\ref{eq:svol}). These formulas can then be used to express the dihedral angles in terms of squared lengths.

Incidentally, there is an alternative useful expression
\begin{align}
\omega_{a} \cdot \omega_{b} = d^2 \pdv{\sV}{\sigma_e},
\end{align}
where $e$ is the edge whose vertices are outside the hinge $h$ common to subsimplices $a$ and $b$. This expression can be derived using (\ref{eq:svol1}), (\ref{eq:ipab1}), (\ref{eq:edots}) and (\ref{eq:metric}).

\subsection{Deficit angles}

In simplicial gravity, curvature is captured by deficit angles. The deficit angle at a hinge is the difference between the flat space(time) value and the actual value for the sum of dihedral angles around the hinge.

At a hinge $h$ in the interior of a region (instead of on the boundary), the deficit angle is defined as 
\begin{align}\label{eq:da1}
\delta_h =& 2\pi - \sum_{s\ni h}\theta_{s, h},
\end{align}
where the sum is over all simplices $s$ containing $h$. 

Here $2\pi$ is the flat space(time) value. The dihedral angles around $h$ can be obtained by projecting the simplices to the plane orthogonal to $h$ and summing the angles around the point $h$ projects to (\cref{sec:da}). In flat Euclidean space, the angles obviously sum to $2\pi$. In flat Lorentzian spacetime, they also sum to $2\pi$ according to \cref{prop:fp2p}. In the complex domain it is taken as an assumption that the flat value is $2\pi$, so that (\ref{eq:da1}) constitutes a definition of the complex deficit angle in general.

If the hinge $h$ lies on the boundary of a region, the dihedral angles around it within that region can sum to less than $2\pi$ for the flat case. Suppose there are $Q_h$ regions sharing the hinge $h$. Then one way to define the deficit angle is
\begin{align}\label{eq:da2}
\delta_h =& \frac{2\pi}{Q_h} - \sum_{s\ni h}\theta_{s, h}.
\end{align}
This ensures additivity, i.e., once all the deficit angles in all regions are summed over (\ref{eq:da1}) is recovered. \Cref{eq:da2} is taken as the general definition of the \textbf{complex deficit angle}, with $Q_h=1$ if $h$ lies in the interior of the region.


\section{Quantum gravity}\label{sec:qg}

Formally, gravitational path integrals take the form
\begin{align}\label{eq:qgl}
Z=\int \mathcal{D}g ~ e^{i \int d^dx \sqrt{-g} ( - \lambda +k R + \cdots)}
\end{align}
in the Lorentzian, and 
\begin{align}\label{eq:qge}
Z=\int \mathcal{D}g ~ e^{\int d^dx \sqrt{g} ( - \lambda + k R+ \cdots)}
\end{align}
in the Euclidean. The dots stand for higher order terms that may be present. Here the Riemann tensor convention is
\begin{align}\label{eq:rt}
{R^{\rho }}_{\sigma \mu \nu }=\partial _{\mu }\Gamma _{\nu \sigma }^{\rho }-\partial _{\nu }\Gamma _{\mu \sigma }^{\rho }+\Gamma _{\mu \lambda }^{\rho }\Gamma _{\nu \sigma }^{\lambda }-\Gamma _{\nu \lambda }^{\rho }\Gamma _{\mu \sigma }^{\lambda },
\end{align}
so that as usual $\lambda>0$ leads to a De Sitter spacetime in cosmology.

To give an exact meaning to these formal expressions non-perturbatively, one needs to specify a way to enumerate gravitational configurations to be summed over.


\subsection{Simplicial quantum gravity}

In simplicial quantum gravity, 
\begin{align}\label{eq:sqg1}
Z=\int_C \mathcal{D}\sigma ~ e^{E[\sigma]}, 
\end{align}
where the exponent $E$ is given below. The gravitational configurations are specified by the squared lengths $\sigma$ on edges of simplicial lattices, and the path integral measure takes the form
\begin{align}\label{eq:sqgm1}
\int_C \mathcal{D}\sigma = (\sum_\tau) \lim_\Gamma \prod_{e\in\Gamma} \int_{-\infty}^\infty d\sigma_e ~ \mu[\sigma]  C[\sigma].
\end{align}
The meaning of the new symbols are explained in the next several paragraphs.

The integration measure factor $\mu[\sigma]$ is not known \textit{a priori}. Suppose one wants to define the path integral so that even on a finite lattice (without taking the lattice refinement limit) the result is exact result. Then one idea for fixing the measure is to demand discretization independence \cite{Dittrich2011PathGravityb}. This would lead to a non-local measure in $4D$ \cite{Dittrich2014DiscretizationGravity}. Alternatively, one could adopt simpler local measures and demand that the exact result be obtained only after taking the lattice refinement limit. In this case different measures could belong to a same universality class and lead to the same result in the lattice refinement limit \cite{Hamber2009QuantumApproach}. However, there seems to be no consensus exactly which measures are correct to be used. In analogy to the continuum measures factors $(\det g)^m$, a commonly used family of simplicial measures is the product of powers of simplicial squared volumes
\begin{align}
\mu[\sigma]=\prod_s \sV_s^m
\end{align}
parametrized by $m$. For the Lorentzian case one could use $\mu[\sigma]=\prod_s (-\sV_s)^m$ to make the measure positive definite, in analogy to $\prod_x (-\det g(x))^m$. When the lattice has fixed size this makes no essential difference from (\ref{eq:sqgm1}) since the two measures only differ by an overall constant. This can be included as a term in the integrand exponent
\begin{align}\label{eq:em}
E_m = m \sum_s \log \sV_s.
\end{align}
Any measure factor can be similarly be incorporated by setting $\mu[\sigma]=1$ and introducing an additional term in the integrand exponent. We will adopt this formulation and fix the measure to be
\begin{align}\label{eq:sqgm}
\int_C \mathcal{D}\sigma = (\sum_\tau) \lim_\Gamma \prod_{e\in\Gamma} \int_{-\infty}^\infty d\sigma_e ~  C[\sigma].
\end{align}

The constraint $C[\sigma]$ specifies the integration contour and determines if the theory is for the Euclidean or Lorentzian. It equals $1$ when the Euclidean/Lorentzian generalized triangle inequalities (\ref{eq:egti})/(\ref{eq:lgti}) are matched and vanishes otherwise. In the Lorentzian case, an additional constraint may be imposed so that each point of a simplicial manifold has two lightcones. This is explained in more detail in \cref{sec:lcs}.

On a fixed lattice graph $\Gamma$, the gravitational configurations are summed over by integrating the squared lengths $\sigma_e$ on edges $e$. The continuum limit $\lim_\Gamma$ is taken by going to ever finer lattice graphs (\cref{fig:sll}). In practice, the lattice field theory strategy is usually adopted. Instead of taking the limit, one evaluates the path integral on a fixed graph and look for the continuum limit by searching for universality classes.

Whether topologies should be summed over in the gravitational path integral is an open question \cite{Hartle1985UnrulyGravity}. In (\ref{eq:sqgm}) the sum over topologies $\sum_\tau$ is included as an option enclosed in brackets.

In (\ref{eq:sqg1}) the path integral is expressed in terms of the path exponent $E$ instead of the action $S$ to retain unified formula for the Euclidean, Lorentzian, and general complex cases. $E$ is related to the actions by
\begin{align}
E=\begin{cases}
-S^E, \quad & \text{in Euclidean space,}
\\
iS^L, \quad & \text{in Lorentzian spacetime.}
\end{cases}
\end{align}
Explicitly, $E$ equals
\begin{align}
E=& \underbrace{- \lambda V}_{E_{CC}} + \underbrace{(-k) \sum_h \delta_h \sqrt{\sV_h-0i}}_{E_{EH}}  + \underbrace{\cdots}_{E_O}. \label{eq:pe}
\end{align}
$E_{O}$ stands for ``other terms'' in addition to the cosmological constant term $E_{CC}$ and the Einstein-Hilbert term $E_{EH}$. The measure factor (\ref{eq:em}) is an example. An $R^2$ term as another example is considered in \cref{sec:2dsqg}. The terms $E_{CC}$ and $E_{EH}$ are discussed below.

\subsection{Cosmological constant term}

The cosmological constant term equals
\begin{align}
E_{CC} = & -\lambda V = -\lambda \sum_s  V_s.
\end{align}
Here $\lambda$ is the cosmological constant, and the sum is over all simplicial volumes $V_s=\sqrt{\sV_s}$ as defined in (\ref{eq:vol}).

In Euclidean space $\sV_s>0$, so $V_s>0$. Therefore large volumes are suppressed by the exponent $E_{CC}$. This agrees with ordinary Euclidean quantum gravity. In Lorentzian spacetime $\sV_s<0$, so $V_s=\sqrt{\sV_h}$ as defined in (\ref{eq:vol}) are positive imaginary. This agrees with the usual convention for Lorentzian quantum gravity in which $E_{CC} = - i \lambda V^L$ with a positive Lorentzian volume $V^L=\sum_s \abs{V_s}$.

\subsection{Einstein-Hilbert term}\label{sec:eht}

The Einstein-Hilbert term equals
\begin{align}\label{eq:eht}
E_{EH} = & -k \sum_h \delta_h \sqrt{\sV_h-0i} .
\end{align}
Here $k>0$ is the gravitational coupling constant, the sum is over all hinges $h$, $\sV_h$ is the squared volume of the hinge $h$, and $\delta_h$ is its deficit angle.
The notation $\sqrt{z-0i}$ is as defined in (\ref{eq:bcc}):
\begin{align}
\sqrt{z-0i}=&\sqrt{r}e^{i\phi/2}, \quad z=r e^{i\phi}\text{ with }\phi\in [-\pi,\pi).
\end{align}
The point is that $\sqrt{z-0i}$ is negative imaginary for $z<0$. 

In the Euclidean domain $\sV_h>0$, so $\sqrt{\sV_h-0i}=\sqrt{\sV_h}>0$. In addition, (\ref{eq:la}) agrees with (\ref{eq:ea}). Then $E_{EH} = -k \sum_h \delta_h V_h$ is minus the Einstein-Hilbert term of Euclidean simplicial quantum gravity in the convention of \cite{Hamber2009QuantumApproach}. This in turn yields in the continuum limit
\begin{align}
Z=\int \mathcal{D}g ~ e^{\int d^dx \sqrt{g} (- k R)}
\end{align}
for the pure gravity path integral. Note the extra minus sign in contrast to (\ref{eq:qge}). Since the Einstein-Hilbert term is unbounded from below, it is unclear if this sign choice is a bad one. 
In a follow up work, we will point out a different branch choice for the angle formula (\ref{eq:la}) which reproduces the the Einstein-Hilbert term with the conventional sign in the Euclidean.\footnote{I am very grateful to Bianca Dittrich and Jos{\'e} Padua-Arg{\"u}elles for discussions that clarified the sign conventions of the Einstein-Hilbert term and the mistakes I made regarding the alternatives for the Einstein-Hilbert term in a previous version of the manuscript. The discussions also clarified how one should interpret Sorkin's Lorentzian Regge action \cite{SorkinLorentzianVectors} so that it is holomorphic. The details of this interpretation will be reported elsewhere.}

For a Lorentzian path integral, (\ref{eq:la}) is used to define the deficit angle $\delta_h$ according to (\ref{eq:da2}). We have
\begin{align}\label{eq:eheld1}
E_{EH} = & ik\sum_{h\text{ timelike}} \delta_h \abs{V_h} - k\sum_{h\text{ spacelike}} \delta_h \abs{V_h},
\end{align}
where $\sum_h$ is expanded into a sum over timelike and spacelike hinges (lightlike hinges do not contribute to the exponent since $\sV_h=0$), and $\abs{V_h}$ is the modulus of $V_h=\sqrt{\sV_h}$. 

Sorkin showed that 
\begin{align}
ik\sum_{h\text{ timelike}} \delta_h \abs{V_h} + k\sum_{h\text{ spacelike}} \tilde{\delta_h} \abs{V_h},
\end{align}
reproduces $ik\int d^dx \sqrt{-g} R$ in the continuum limit when $\tilde{\delta_h}$ is positive for a spacelike Lorentz boost deficit angle \cite{Sorkin1974DevelopmentFields}.\footnote{In this statement $R$ is as defined from (\ref{eq:rt}). Note that Sorkin used an opposite sign convention for $R$ in the original paper \cite{Sorkin1974DevelopmentFields}.} By \cref{prop:caba}, in the convention of the present work a spacelike Lorentz boost deficit angle $\delta_h$ is negative imaginary. Therefore (\ref{eq:eheld1}) also reproduces the commonly used path integral exponent $E_{EH}=iS_{EH}=ik\int d^dx \sqrt{-g} R$ of (\ref{eq:qgl}).

\subsection{Lightcone structures}\label{sec:lcs}

In ordinary classical space-time, each point has two lightcones attached to it. In simplicial gravity, a point can have more or fewer than two light cones (\cref{fig:2dlcs}). 

\begin{figure}
    \centering
    \includegraphics[width=.3\textwidth]{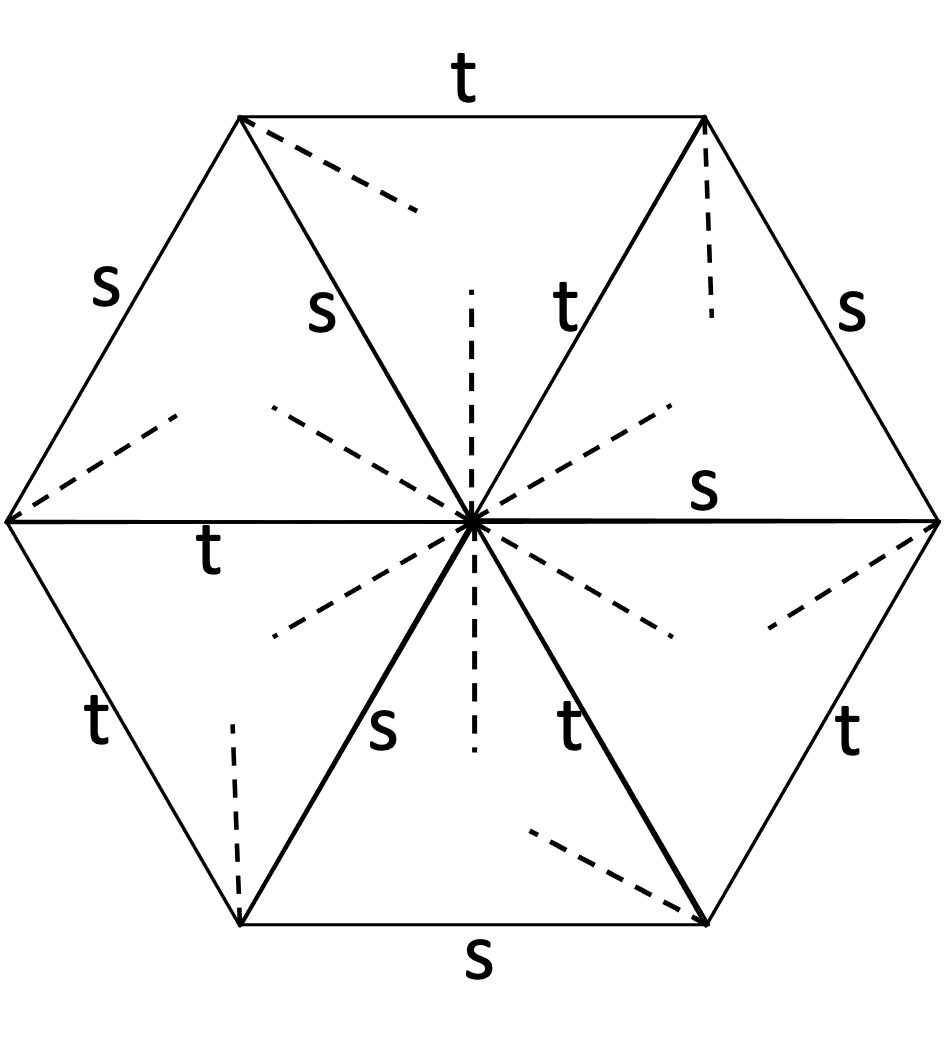}
    \caption{Irregular lightcone structure in $2D$. The point at the center has six light rays (dashed lines) and three lightcones, if spacelike (s) and timelike (t) edges are as assigned.}
    \label{fig:2dlcs}
\end{figure}

It is an open question whether such spacetime configurations with irregular lightcone structures should be included in the gravitational path integral. When they are included the exponent becomes complex rather than staying imaginary. This is because the constant $2\pi$ in the exponents are cancelled exactly when the angles enclose four light rays, as in ordinary flat spacetime (\cref{prop:rtheta}). Depending on the sign choice for the exponent, a space-time configurations with the irregular lightcone structures is either suppressed or enhanced by the additional non-vanishing real part of the exponent. 

In \cite{Louko1995ComplexChange}, reasons are offered to prefer the enhancement (suppression) of configurations with fewer (more) than four light rays. 
The exponent (\ref{eq:eheld1}) with the extra minus sign conforms with the opposite choice. As will be reported in details elsewhere, a different branch choice for the angle formula (\ref{eq:la}) reverses the enhancement/suppression.
If irregular light structures are allowed in Nature, observing the enhancement/suppression effects could in principle help us to determine the branch choice.


\section{Holomorphic flow}\label{sec:hf}

Analytic calculations for the non-perturbatively defined gravitational path integral is hard. In the Euclidean, one usually proceeds numerically with Markov Chain Monte Carlo simulations. The efficiency of this method relies on positivity of the path integrand in the Euclidean. In the Lorentzian, however, the path integrand is complex. The leads to the sign problem. The phase of the complex numbers summed over can fluctuate wildly to cancel each other off, which reduces the efficiency of Markov Chain Monte Carlo simulations. 

The sign problem is not restricted to quantum gravity, but is also encountered in quantum theories of matter. Several methods have been developed to overcome the sign problem (see e.g., \cite{AlexandruComplexProblem, Berger2019ComplexPhysics, Gattringer2016ApproachesTheory} and references therein). The basic idea of the complex path methods is to deform the integration contour to the complex to reduce the phase fluctuations. This idea is demonstrated to work for several models, including low dimension Thirring models, real time scalar field theories, and Hubbard models \cite{AlexandruComplexProblem}. It has also been applied to analyze gravitational propagators for spin-foam models in the large spin limit \cite{Han2021SpinfoamPropagator}.

As reviewed in \cite{AlexandruComplexProblem} there are several different ways to implement the general idea of complex path deformation to overcome the sign problem. In later sections we apply the ``holomorphic gradient flow'' algorithm, also called the ``generalized thimble'' algorithm, \cite{Alexandru2016SignThimbles, Alexandru2017MonteCarloModel} to Lorentzian simplicial quantum gravity. This section summarizes the algorithm.

\subsection{Flow equations}\label{sec:fe}

The celebrated Cauchy integration theorem indicates that up to a sign the integral of a complex function $f(z)$ does not change value if the integration contour is deformed through a region where $f(z)$ is holomorphic. 

Cauchy's theorem admits a multi-dimensional generalization \cite{AlexandruComplexProblem} which applies to path integrals of multiple variables. The holomorphic gradient flow algorithm exploits this to find deformed contours where the sign problem is mitigated. Consider a path integral with a holomorphic integrand of the form
\begin{align}\label{eq:pi1}
Z =& \int \mathcal{D}\sigma ~ e^{E[\sigma]},
\end{align}
where in $\mathcal{D}\sigma$ multiple configurations $\sigma_e$ labelled by the lattice edges $e$ are integrated over. The \textbf{flow equations} are
\begin{align}\label{eq:fe}
\frac{d \sigma_{e}}{dt}=&-\overline{\partial_e E} \quad \forall e,
\end{align}
where $\partial_{e}$ is a shorthand for $\pdv{}{\sigma_{e}}$, and the overline stands for complex conjugation. For any point $\zeta$ in the original integration contour, the solution to (\ref{eq:fe}) as a function of the flow time $t$ defines the \textbf{holomorphic gradient flow} (or holomorphic flow in short) for $\zeta$. Solving (\ref{eq:fe}) for the whole original integration contour yields a deformation of the integration contour as a function $t$. 

\begin{figure}
    \centering
    \includegraphics[width=.6\textwidth]{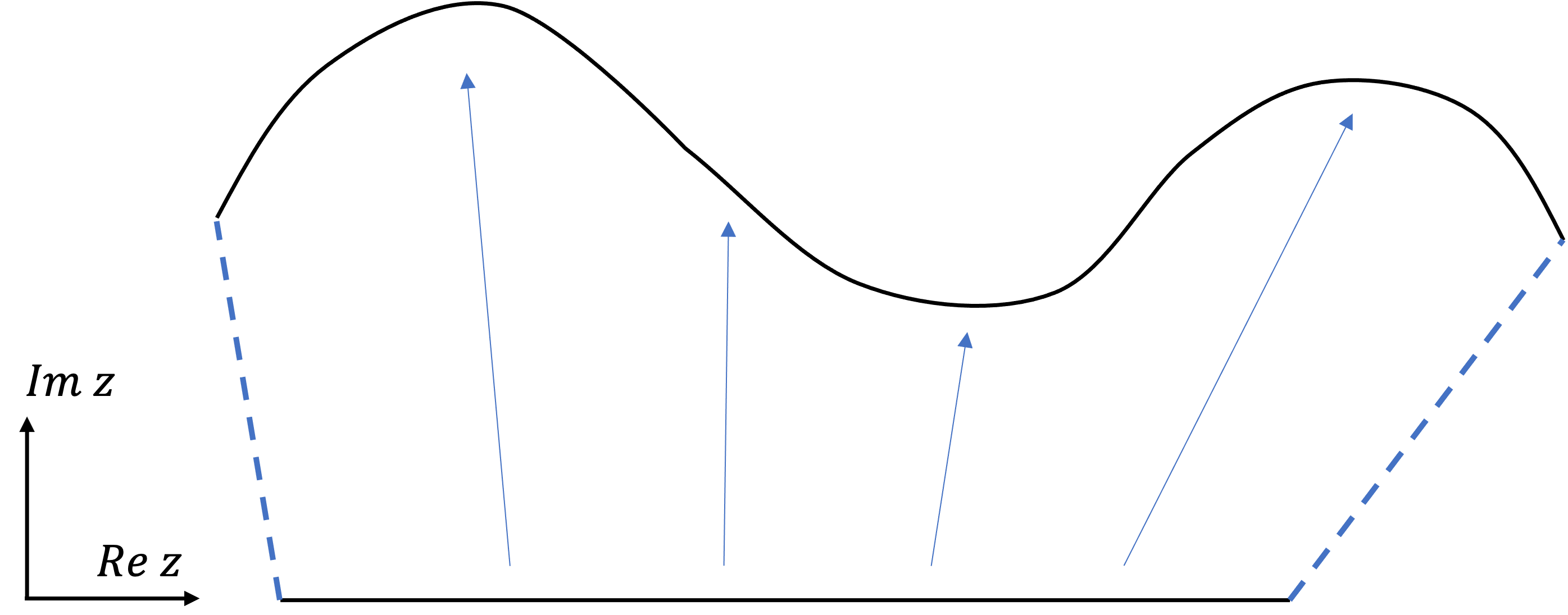}
    \caption{Schematic illustration of the flow region and its boundary. The original contour at the bottom is deformed into the contour at the top. The integral along these contours plus on the dashed boundaries is zero, if the function being integrated over is holomorphic inside. If the integral on the dashed boundaries are negligibly small, then the integrals on the two contours are equal up to a sign.}
    \label{fig:hfcb}
\end{figure}

If the integral along the boundary of the flowed region is negligible, then up to a sign (\ref{eq:pi1}) can be evaluated on the flowed contour (\cref{fig:hfcb}). This could reduce the phase fluctuations for the complex numbers integrated over, because only a smaller region on the flowed contour contribute significantly to the integral, and the phase fluctuations could be small in this smaller region. 

To see this, we look at the real part $E_R$ and the imaginary part $E_I$ of $E$. By (\ref{eq:fe}),
\begin{align}\label{eq:DreEDt}
\dv{E_R}{t}=&\frac{1}{2}(\dv{E}{t}+\overline{\dv{E}{t}})=\frac{1}{2}\sum_e (\partial_e E \dv{\sigma_e}{t}+\overline{\partial_e E \dv{\sigma_e}{t}})=-\sum_e\abs{\partial_e E}^2\le 0,
\\\dv{E_I}{t}=&\frac{1}{2i}(\dv{E}{t}-\overline{\dv{E}{t}})=\frac{1}{2i}\sum_e (\partial_e E \dv{\sigma_e}{t}-\overline{\partial_e E \dv{\sigma_e}{t}})= 0.
\end{align}
Therefore the real part of the exponent decreases monotonically through the flow, while the imaginary part stays constant. 
For sufficiently long flow time, the magnitude of the integrand is exponentially suppressed for most points on the deformed contour. Only points close to the critical points of the flow obeying 
\begin{align}
\partial_e E=0 \quad \forall e
\end{align}
contribute significantly. 

If the phase fluctuations for such points that contribute significantly is small enough, Markov Chain Monte Carlo simulation can be efficiently performed. 

\subsection{Numerical algorithm}\label{sec:na}

As a summary of \cref{sec:fe}, suppose:
\begin{itemize}
\item The holomorphic flow transverse a region where the path integrand is holomorphic;
\item The boundary of the flow region have negligible contribution to the path integral.
\end{itemize}
Then the original path integral can be equally evaluated along the contour at any flow time $t=T$.

To compute the path integral on the flowed contour, one could use the holomorphic gradient flow algorithm \cite{Alexandru2016SignThimbles, Alexandru2017MonteCarloModel}. The idea is to parametrize the flowed contour by its preimage in the original contour, and perform Markov Chain Monte Carlo simulation using weights on the flowed contour. Specifically, the algorithm goes as:
\begin{enumerate}
\item Start with a configuration $\zeta$ in the original contour. Evolve it under the holomorphic flow by time $T$ to obtain $\phi=\phi(\zeta)$.
\item Draw a new configuration $\zeta'=\zeta+\delta\zeta$ on the original contour, where $\delta\zeta$ is a random vector drawn from a symmetric distribution. Again evolve $\zeta'$ under the flow by time $T$ to obtain $\phi'=\phi'(\zeta')$.
\item Accept $\zeta'$ with probability $P = \min\{1, e^{\Re E_{\text{eff}}(\phi')-\Re E_{\text{eff}}(\phi)} \}$, where $E_{\text{eff}}$ is defined in (\ref{eq:Eeff}).
\item Repeat steps 2 and 3 until a sufficient ensemble of configurations is generated.
\item Compute the expectation values using
\begin{align}\label{eq:expo}
\ev{O}=&\frac{\ev{O e^{i\varphi(\zeta)}}_{\Re E_{\text{eff}}}}{\ev{e^{i\varphi(\zeta)}}_{\Re E_{\text{eff}}}},
\end{align}
where $\ev{\cdot}_{\Re E_{\text{eff}}}$ stands for the average using the ensemble just generated, and $\varphi$ is defined in (\ref{eq:vphi}).
\end{enumerate}

In steps 1 and 2, the evolution can be conducted through numerically integrating the ODEs (\ref{eq:fe}). If the complexified theory has is domain on Riemann surfaces, as is the case for simplicial quantum gravity, branches need to be recorded as part of the numerical integration algorithm to make sure the system flows continuously on the Riemann surfaces. In Step 3,
\begin{align}\label{eq:Eeff}
E_{\text{eff}}(\phi) =& E(\phi(\zeta)) + \log \det J(\zeta),
\quad
J_{ee'} = \pdv{\phi_e}{\zeta_{e'}},
\end{align}
where $\phi_e$ and $\zeta_{e}$ are the values $\phi$ and $\zeta$ take on the edge $e$. The Jacobian can be obtained (see Appendix A of \cite{AlexandruComplexProblem}) by integrating
\begin{align}\label{eq:jcb}
\frac{d J_{ee'}}{dt}= \sum_{e''}\overline{H_{ee''}J_{e''e'}},\quad H_{ee'}:=-  \partial_{e'}\partial_e E, \quad J_{ee'}(0)=\delta_{ee'}.
\end{align}
The function $e^{E_{\text{eff}}}$ is the integrand of the final integral to be computed, since
\begin{align}
Z=&\int_{M_0} e^{E(\zeta)} d\zeta 
\\=& \int_{M_T} e^{E(\phi)} d\phi
\\=& \int_{M_0} e^{E(\phi(\zeta))} \det J ~d\zeta,
\label{eq:ppi}
\end{align}
where we reparametrized the flowed manifold $M_T$ by points $\zeta$ of the original manifold $M_0$ in the last step. Now the integrand equals $e^{E_{\text{eff}}}$ for $E_{\text{eff}}$ defined in (\ref{eq:Eeff}). Expanding $E_{\text{eff}}$ in real and imaginary parts yields $e^{E_{\text{eff}}}= e^{\Re E_{\text{eff}}  + i \varphi}$, where
\begin{align}\label{eq:vphi}
\varphi= \Im E_{\text{eff}}=\Im E + \arg\det(J).
\end{align}
This explains steps 3 and 5, in which we sample (\ref{eq:ppi}) according to the magnitude $e^{\Re E_{\text{eff}}}$ of the integrand, and treat the phase $e^{i\varphi}$ as part of the observable in (\ref{eq:expo}).

This algorithm can alleviate the sign problem because as $T\rightarrow\infty$, the flowed manifold approaches a combination of steepest descent contours (Lefschetz thimbles) on each of which $\varphi$ is constant \cite{AlexandruComplexProblem}.

However, the usefulness of the algorithm is not guaranteed because of ``trapping'' for the Monte Carlo sampling. As noted below (\ref{eq:DreEDt}) $\Re E$ decreases monotonically under the holomorphic flow, so $\Re E_{\text{eff}}$ also tends to decrease. As $T$ is increased, the probability weight $e^{\Re E_{\text{eff}}}$ develop peaks around the stationary points where $\partial_e E=0$, separated by valleys where $e^{\Re E_{\text{eff}}}$ is exponentially suppressed. Consequently it can be hard for the Markov chain to travel across the peak regions to generate a sufficient sample.

In practice, we need to find a flow time $T$ large enough so that the phase fluctuation in $\varphi$ is sufficiently suppressed to tame the sign problem, and small enough so that the trapping of the Markov chain is sufficiently weak. More sophisticated algorithms such as the tempering algorithms \cite{Fukuma2017ParallelThimbles, Alexandru2017TemperedThimbles} involving multiple flow times/chains have been developed to avoid the trapping issue. In principle general Markov Chain Monte Carlo algorithms for multimodal distributions can also be applied.

\section{2D simplicial quantum gravity}\label{sec:2dsqg}

We apply the holomorphic gradient flow method to overcome the sign problem for Lorentzian simplicial gravitational path integrals. We focus on the 2D case for this initial study on the topic. The relevant expressions for the holomorphic flow equation and the Jacobian equation are given in this section. Along the way we prove a complex version of the Gauss-Bonnet theorem, which may be of independent interest. The numerical results are presented in the next section.

In $2D$, we consider the path integral
\begin{align}
Z =& \int \mathcal{D}\sigma ~ e^E,
\\
E = & - \lambda V-k \sum_v \delta_v + a \sum_v \frac{\delta_v^2}{A_v} + m \sum_t \log \sV_t.
\end{align}
The first (cosmological constant) and second (Einstein-Hilbert) terms are as is (\ref{eq:pe}) specialized to $2D$. The fourth term is the measure factor term of (\ref{eq:em}). The third term $a \sum_v \delta_v^2/A_v$ is the $R^2$ term \cite{Hamber1986Two-dimensionalGravity}. Here $a$ is the coupling constant, and $A_v$ is the area share of vertex $v$:
\begin{align}\label{eq:av}
A_v =& \frac{1}{3}\sum_{t\ni v} V_t = \frac{1}{3}\sum_{t\ni v} \sqrt{\sV_t}, 
\end{align}
where the sum is over triangles $t$ containing vertex $v$, and $\sV_t$ is the squared volume for triangle $t$ calculated according to (\ref{eq:svol1}) or (\ref{eq:svol}). The letter $A$ instead of $V$ is used for $A_v$ to distinguish from the hinge (vertex in $2D$) volume $V_h=V_v$, which is usually set to $1$ in $2D$. 



\subsection{Complex Gauss-Bonnet theorem}

The Einstein-Hilbert term $E_{EH}=-k \sum_v \delta_v$ can actually be left out of the path integration because it is topological.

In the Euclidean domain, the celebrated Gauss-Bonnet theorem says that $E_{EH}=k 2\pi \chi$, where $\chi$ is a topological invariant that is fixed by the simplicial complex, and does not depend on the particular length assignments. The same holds in the Lorentzian domain. A nice prove can be found in \cite{SorkinLorentzianVectors}, and a slight generalization that accounts for multiple boundary components can be found in \cite{Jia2022Time-spaceGravity}.

That a version of the Gauss-Bonnet theorem exists in the complex domain was suggested by Louko and Sorkin \cite{Louko1995ComplexChange}, but they left it as an open question to investigate. 

Here we prove a complex version of the Gauss-Bonnet theorem, which generalizes the Euclidean and Lorentzian versions. It implies that on a fixed simplicial lattice, $E_{EH}$ is constant when the Lorentzian or Euclidean contour is continuously deformed into the complex domain. Therefore $E_{EH}$ can be taken out of the path integral in the holomorphic gradient flow algorithm.
\begin{theorem}[Complex Gauss-Bonnet]\label{th:cgb}
On a fixed simplicial lattice, any continuous deformation of the path integration contour in the complex domain will not change the value of the Einstein-Hiblert term $E_{EH}$. 

If the deformation is continuously connected to the Lorentzian or the Euclidean contour,
\begin{align}\label{eq:cgbt1}
E_{EH}/(-k) = 2\pi \chi, \quad \chi=V-E+T,
\end{align}
where $V,E,T$ are the vertex, edge, and triangle numbers of the simplicial lattice, and $\chi$ is Euler number. This simple result assumes that each boundary vertex is shared by two regions. 

More generally, when the numbers of regions sharing the vertices $v$ is $Q_v$,
\begin{align}\label{eq:cgbt2}
E_{EH}/(-k) = 2\pi \chi, \quad \chi= V^{\mathrm{o}} + \frac{1}{2} V^\partial - E + T + \sum_{v\in \partial} \frac{1}{Q_v},
\end{align}
where the bulk and boundary elements are labelled by superscripts $\mathrm{o}$ and $\partial$, and the sum $\sum_{v\in \partial}$ is over all boundary vertices.
\end{theorem}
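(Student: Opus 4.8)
The plan is to specialize to $2D$, where the hinges are the vertices $v$ with $\sqrt{\sV_v-0i}=1$, so that $E_{EH}/(-k)=\sum_v \delta_v$, and then to feed the triangle angle-sum result of \Cref{th:sta} into a combinatorial count. First I would insert the deficit-angle definition (\ref{eq:da2}) and reorganize the double sum over (vertex, triangle) incidences into a sum over triangles, using that each triangle contributes exactly its three interior angles:
\[
E_{EH}/(-k)=\sum_v \delta_v = 2\pi\sum_v \frac{1}{Q_v}-\sum_t\sum_{v\in t}\theta_{t,v}.
\]
This cleanly separates a purely combinatorial prefactor from the geometric content carried by the per-triangle angle sums $\sum_{v\in t}\theta_{t,v}$.

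For the first assertion (invariance under contour deformation) I would argue as follows. The prefactor $2\pi\sum_v 1/Q_v$ does not depend on the edge lengths at all. For each triangle, \Cref{th:sta} confines $\sum_{v\in t}\theta_{t,v}$ to the discrete set $\{(2n+1)\pi : n\in\mathbb{Z}\}$. Along any continuous deformation that stays on the relevant Riemann surfaces (i.e.\ avoids the $\log$/square-root branch loci), each per-triangle angle sum is a continuous function valued in this discrete set, hence locally constant; summing over triangles shows $E_{EH}$ is unchanged.

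For the second assertion (the explicit value) I would pin down the branch integers by connecting to a real contour. On the Euclidean contour each triangle angle sum is the familiar $\pi$ (so $n=0$), and on the Lorentzian contour the analysis of \Cref{sec:la} (via \Cref{prop:rtheta,th:sta}) again gives real part $\pi$ and vanishing imaginary part. Hence $\sum_t\sum_{v\in t}\theta_{t,v}=\pi T$ and
\[
E_{EH}/(-k)=2\pi\sum_v\frac{1}{Q_v}-\pi T .
\]
Writing $\sum_v 1/Q_v=V^{\mathrm o}+\sum_{v\in\partial}1/Q_v$ and comparing with the target $2\pi\chi$ of (\ref{eq:cgbt2}), the whole claim collapses to the combinatorial identity $2E-3T=V^\partial$. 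I would prove this from the incidence count $3T=2E^{\mathrm o}+E^\partial=2E-E^\partial$ (each triangle has three edges, with interior edges meeting two triangles and boundary edges one), so $2E-3T=E^\partial$, together with $E^\partial=V^\partial$ because the boundary is a disjoint union of cycles. Specializing $Q_v=2$ on the boundary then recovers the simpler formula (\ref{eq:cgbt1}).

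The hard part will be the invariance step rather than the bookkeeping. Its legitimacy rests on two points that deserve care: first, although the individual complex angles are genuinely complex, each triangle's angle sum is pinned to the \emph{real} odd multiples of $\pi$ precisely because the product $\alpha\alpha_1\alpha_2=-1$ is real and negative, so the imaginary parts cancel and the discreteness of $\{(2n+1)\pi\}$ really does apply; and second, the deformation must be assumed not to cross any branch locus of the $\log$ or square root, so that the angle sum varies continuously on the Riemann surface. Once these are secured, the locally-constant argument is immediate and the remaining combinatorial identities are routine.
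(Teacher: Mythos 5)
Your proposal is correct and follows essentially the same route as the paper's proof: grouping the deficit-angle sum into per-triangle angle sums, pinning these to the discrete set $(2n+1)\pi$ via \cref{th:sta} so that continuity under contour deformation forces constancy, then evaluating on the real contours using the Euclidean/Lorentzian triangle angle sum $\pi$ together with the incidence identities $3T=2E^{\mathrm{o}}+E^{\partial}$ and $E^{\partial}=V^{\partial}$. The only cosmetic difference is that you apply the discreteness argument triangle by triangle and compress the bookkeeping into the single identity $2E-3T=V^{\partial}$, whereas the paper applies it to the total $E_{EH}$ and adds its three counting equations directly.
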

\begin{proof}
In $2D$, the Einstein-Hilbert equals
\begin{align}
E_{EH}/(-k)=\sum_v \delta_v=&(\sum_v 2\pi/Q_v - \sum_a \theta_a)
\\
=& (\sum_v 2\pi/Q_v - \pi N).\label{eq:cgb1}
\end{align}
In the first line we used the definition (\ref{eq:da2}) of the deficit angle. In $\delta_v$ for each vertex $v$, there is a sum over angles $\theta$ around that vertex. After $\sum_v$, we obtain a sum $\sum_a \theta_a$ is over all triangular angles of the $2D$ simplicial complex. In the second line we grouped the angles into triangles and applied \cref{th:sta}. Here $N$ is some integer. This shows that the $E_{EH}$ can only take values from a discrete set labelled by $N$.

Under a continuous deformation of the contour, a holomorphic function such as $E_{EH}$ can only change value continuously. Yet we just showed that the codomain of $E_{EH}$ is a discrete set. Therefore $E_{EH}$ cannot change value under a continuous deformation of the contour. 

The claims (\ref{eq:cgbt1}) and (\ref{eq:cgbt2}) can be proved by the same argument in \cite{SorkinLorentzianVectors} and \cite{Jia2022Time-spaceGravity}. In the Lorentzian and Euclidean domains,
\begin{align}\label{eq:gb1}
E_{EH}/(-k\pi)=&2V^{\mathrm{o}} + \sum_{v\in \partial} \frac{2}{n_v} - T,
\\
0=&-2 E^{\mathrm{o}} - E^\partial + 3T,
\label{eq:gb2}
\\
0=&V^{\partial} - E^\partial.
\label{eq:gb3}
\end{align}
Equation (\ref{eq:gb1}) uses the fact that in the interior of the region, $Q_v=1$, and that in the Lorentzian and Euclidean domains the angles of a triangle sum to $\pi$ (\cref{prop:ltri}), whence $N=T$. Equations (\ref{eq:gb2}) and (\ref{eq:gb3}) are simple facts about the simplicial lattice. Each bulk edge is shared by two faces, each boundary edge is shared by one face, and each face has three edges so (\ref{eq:gb2}) follows. The boundary is formed by a vertex-edge-vertex-edge... chain so (\ref{eq:gb3}) follows. Adding up (\ref{eq:gb1}) to (\ref{eq:gb3}) yields (\ref{eq:cgbt2}). Specializing to $Q_v=2$ for all $v$ yields (\ref{eq:cgbt1}).
\end{proof}

\subsection{Flow equations}

Because of \cref{th:cgb}, $\partial_e E_{EH}=0$, so the flow equations (\ref{eq:fe}) become
\begin{align}
\frac{d \sigma_{e}}{dt}=&-\overline{\partial_e E}=-\overline{\partial_e E_{CC}}-\overline{\partial_e E_{R^2}}-\overline{\partial_e E_{m}}.
\end{align}
For the cosmological constant term $E_{CC}$,
\begin{align}
\partial_e E_{CC}
=&  -\lambda \partial_e V
\\
=& -\lambda \sum_t \partial_e V_t.
\end{align}
For the $R^2$ term $E_{R^2}$,
\begin{align}
\partial_{e} E_{R^2}
=&  a \sum_v \partial_{e}(\frac{\delta_v^2}{A_v})
\\=& a \sum_v [\frac{2\delta_v \partial_{e}\delta_v}{A_v}-\frac{\delta_v^2 \partial_{e}A_v}{A_v^2}].
\end{align}
For the measure term $E_m$,
\begin{align}
\partial_{e} E_{m}
=&  m \sum_t \partial_{e} \log \sV_t
\\=&  m \sum_t \sV_t^{-1}\partial_{e} \sV_t.
\end{align}
Therefore
\begin{align}
\frac{d \sigma_{e}}{dt}=& 
-\overline{\partial_e E_{CC}}-\overline{\partial_e E_{R^2}}-\overline{\partial_e E_{m}}
\\
=& \lambda \sum_t \overline{\partial_e V_t}
- a \sum_v \overline{(\frac{2\delta_v \partial_{e}\delta_v}{A_v}-\frac{\delta_v^2 \partial_{e}A_v}{A_v^2})}-m \sum_t\overline{\sV_t^{-1}\partial_{e} \sV_t}.\label{eq:2dfe}
\end{align}
This formula needs to be expressed in terms of the squared lengths to be applied. While $\delta_v$, $A_v$, and $\sV_t$ in terms of the squared lengths are known from the definitions, the derivative terms in terms of the squared lengths are given below.

\subsection*{Volume terms}

For $\partial_e V_t$ and $\partial_e A_v$, a straightforward calculation using the definitions yields
\begin{align}
\partial_e V_t=&\frac{\partial_e \sV_t}{2 \sqrt{\sV_t}}=\frac{\partial_e \sV_t}{2 V_t},
\label{eq:2dvbe}
\\\partial_e \sV_t=&\frac{1}{8} \left(-\sigma _e+\sigma _{e1}+\sigma _{e2}\right),  \label{eq:2dv2be}
\\\partial_e A_v =& \frac{1}{3}\sum_{t\ni v} \partial_e V_t=\frac{1}{3}\sum_{t\ni v, e} \partial_e V_t,
\end{align}
where $e1, e2$ are the other two edges of the triangle $t$. 

\subsection*{Angle terms}

For $\partial_{e}\delta_v$,
\begin{align}
\delta_v =& 2\pi/Q_v - \sum_{t\ni v}\theta_{t,v},
\\\partial_{e}\delta_v =& -\sum_{t\ni v} \partial_{e} \theta_{t,v} = -\sum_{t\ni v, e} \partial_{e} \theta_{t,v}.\label{eq:dabl2}
\end{align}
For $a$ and $b$ in triangle $t$ meeting at vertex $v$, (\ref{eq:ca}) implies
\begin{align}
\frac{\partial \theta _{t,v}}{\partial \sigma_{a}} = & \frac{\sigma _a-\sigma _b+\sigma _c}{4i \sigma _a (a\wedge b) }=\frac{\sigma _a-\sigma _b+\sigma _c}{8 \sigma _a V_t },
\label{eq:abe1}
\\
\frac{\partial \theta _{t,v}}{\partial \sigma_{b}} = & \frac{-\sigma _a+\sigma _b+\sigma _c}{4i \sigma _b  (a\wedge b)}=\frac{-\sigma _a+\sigma _b+\sigma _c}{8 \sigma _b V_t},
\label{eq:abe2}
\\
\frac{\partial \theta _{t,v}}{\partial \sigma_{v}} = & \frac{i}{2 a\wedge b} = \frac{-1}{4 V_t}.
\label{eq:abe3}
\end{align}
Here we noted that
\begin{align}\label{eq:awedgeb2}
a\wedge b=& -2 i V_t,
\end{align}
where $V_t$ in terms of squared lengths is given in (\ref{eq:2dvol}). These can be used to express (\ref{eq:dabl2}) fully in the squared lengths.

\subsection{Jacobian}

The Jacobian flow equation is given in (\ref{eq:jcb}) as 
\begin{align}
\frac{d J_{ee'}}{dt}= \sum_{e''}\overline{H_{ee''}J_{e''e'}},\quad H_{ee'}:=-  \partial_{e'}\partial_e E, \quad J_{ee'}(0)=\delta_{ee'}.
\end{align}
Specialized to simplicial quantum gravity in $2D$,
\begin{align}
H_{ee'}=& -  \partial_{e'}\partial_e E = -  \partial_{e'}\partial_e E_{CC}-  \partial_{e'}\partial_e E_{R^2}-\partial_{e'}\partial_e E_{m},
\end{align}
where the Einstein-Hilbert term drop out by \cref{th:cgb}.

\subsection*{The cosmological constant term}

The cosmological constant term is
\begin{align}
\partial_{e'}\partial_e E_{CC} =& -\lambda \sum_t\partial_{e'}\partial_e V_t
\\ =& -\lambda \sum_{t\ni e, e'} \partial_{e'}\partial_e V_t,
\label{eq:2dECCbee}
\end{align}
where it was noted that $\partial_{e'}\partial_e V_t= 0$ if the triangle $t$ does not contain both $e$ and $e'$. By (\ref{eq:2dvbe}) and (\ref{eq:2dv2be}),
\begin{align}
\partial_{e'}\partial_e V_t=&\frac{1}{2 \sqrt{\sV_t}} ( \frac{-1}{2 \sV_t} \partial_e \sV_t \partial_{e'} \sV_t + \partial_{e'} \partial_e \sV_t).
\label{eq:2dvbee}
\\\partial_e \sV_t=&\frac{1}{8} \left(-\sigma _e+\sigma _{e1}+\sigma _{e2}\right),
\\\partial_{e'} \partial_e \sV_t=&
\begin{cases}
\frac{-1}{8}, \quad & e = e',
\\\frac{1}{8}, & e \ne e'.
\end{cases}
\label{eq:2dv2bee}
\end{align}
Plugging these in (\ref{eq:2dECCbee}) yields an expression in terms of squared lengths.

Regarding computational complexity it is relevant to note that $\partial_{e'}\partial_e E_{CC}$ is quasi-local. Because the sum $\sum_{t\ni e, e'}$ in (\ref{eq:2dECCbee}) is over triangles $t$ that contain both $e$ and $e'$, if $e$ and $e'$ are not identical or adjacent then $\partial_{e'}\partial_e E_{CC}=0$. 

\subsection*{The $R^2$ term}
For the $R^2$ term,
\begin{align}\label{eq:2dER2bee}
\partial_{e} \partial_{e'} E_{R^2}
=& \sum_v \frac{a }{A_v^3}  [2 \delta_vA_v\left(-\delta_v^{(0,1)} A_v^{(1,0)}-\delta_v^{(1,0)} A_v^{(0,1)}+\delta_v^{(1,1)} A_v\right)\nonumber
\\
&+\delta_v^2 \left(2 A_v^{(0,1)} A_v^{(1,0)}-A_vA_v^{(1,1)}\right)+2 \delta_v^{(0,1)} \delta_v^{(1,0)} A_v^2],
\end{align}
where $f^{(i,j)}$ is the shorthand for $\partial_{e}^i \partial_{e'}^j f$.

\begin{figure}
    \centering
    \includegraphics[width=.4\textwidth]{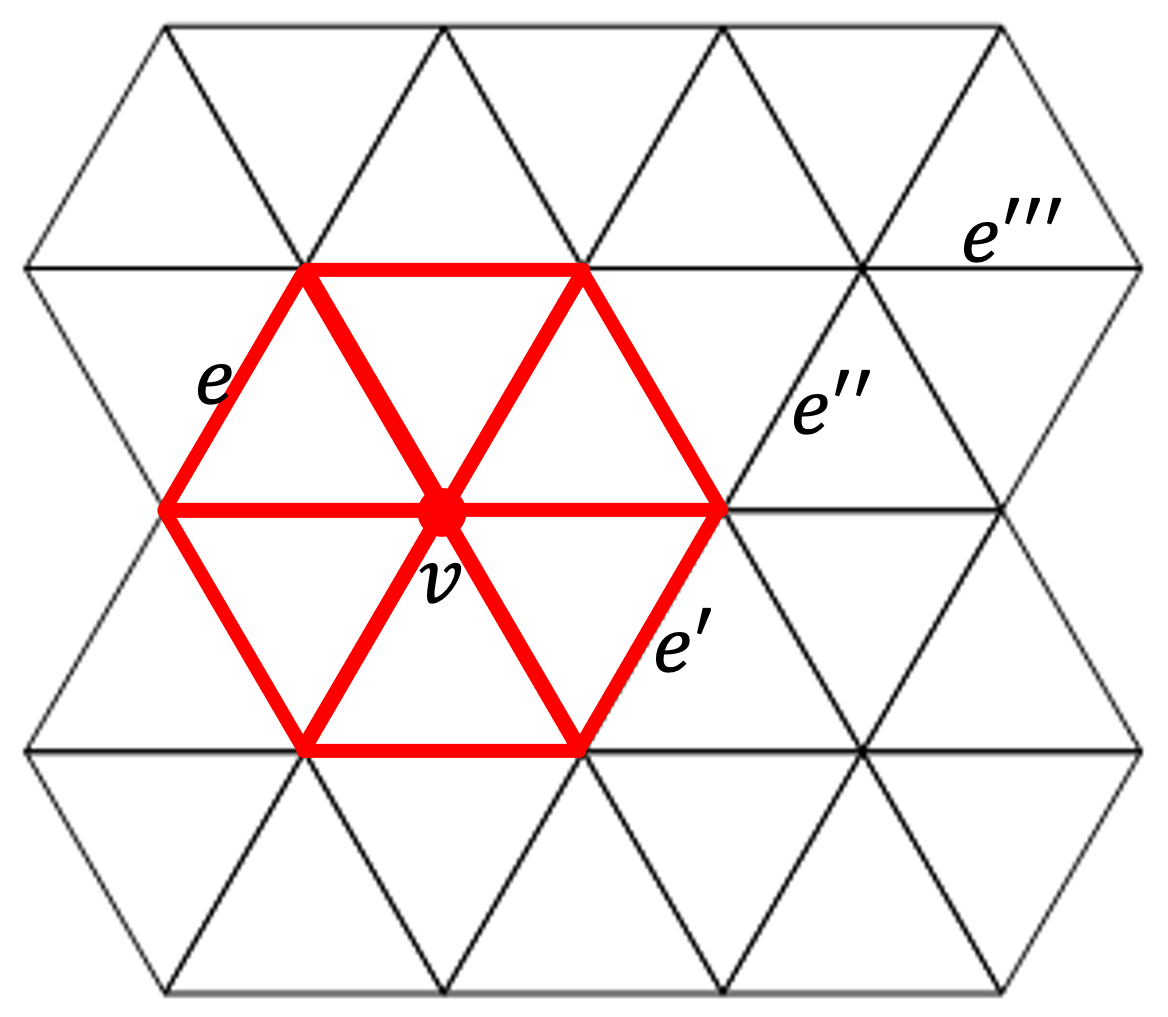}
    \caption{The edges that $A_v$ and $\delta_v$ depend on are thickened. They are all within one edge away from $v$, and are all within two edges away from each other. A pair of edges (e.g., $e$ and $e'''$) more than two edges away will not find any vertex $v$ whose $A_v$ and $\delta_v$ depend on them both. Even a pair of edges (e.g., $e$ and $e''$) two edges away may not find any vertex $v$ whose $A_v$ and $\delta_v$ depend on them both.}
    \label{fig:ql2d}
\end{figure} 

We see that $\partial_{e} \partial_{e'} E_{R^2}$ is quasi-local, in the sense that $\partial_{e} \partial_{e'} E_{R^2}=0$ when $e$ and $e'$ are more than two edges away (meaning the shortest lattice graph path touching both $e$ and $e'$ has more than two edges) (\cref{fig:ql2d}). This is because $\partial_e\delta_v=\partial_e A_v=0$ if $e$ is more than one edge away from $v$. If $e$ and $e'$ are more than two edges away, then at least one of them is more than one edge away from $v$ for any $v$, whence all terms on the right hand side of (\ref{eq:2dER2bee}) vanish.

\subsubsection*{Volume terms}

By the definition of $A_v$,
\begin{align}
A_v^{(1,0)}=&\partial_e V_v = \frac{1}{3}\sum_{t\ni v} \partial_e  V_t=\frac{1}{3}\sum_{t\ni v,e,e'} \partial_e V_t,
\\
A_v^{(1,0)}=&\partial_{e'} V_v = \frac{1}{3}\sum_{t\ni v} \partial_{e'} V_t=\frac{1}{3}\sum_{t\ni v,e,e'}  \partial_{e'} V_t,
\\
A_v^{(1,1)}=&\partial_e \partial_{e'} V_v = \frac{1}{3}\sum_{t\ni v} \partial_e \partial_{e'} V_t=\frac{1}{3}\sum_{t\ni v,e,e'} \partial_e \partial_{e'} V_t.
\end{align}
These can be expressed in terms of squared lengths using (\ref{eq:2dvbe}), (\ref{eq:2dv2be}), (\ref{eq:2dvbee}), and (\ref{eq:2dv2bee}):
\begin{align}
\partial_e V_t =&  \frac{\partial_e \sV_t}{2 \sqrt{\sV_t}},
\\
\partial_e \sV_t=&\frac{1}{8} \left(-\sigma _e+\sigma _{e1}+\sigma _{e2}\right),
\\
\partial_{e'}\partial_e V_t =&\frac{1}{2 \sqrt{\sV_t}} ( \frac{-1}{2 \sV_t} \partial_e \sV_t \partial_{e'} \sV_t + \partial_{e'} \partial_e \sV_t),
\\\partial_{e'} \partial_e \sV_t=&
\begin{cases}
\frac{-1}{8}, \quad & e = e',
\\\frac{1}{8}, & e \ne e'.
\end{cases}
\end{align}

\subsubsection*{Angle terms}

\begin{figure}
    \centering
    \includegraphics[width=.4\textwidth]{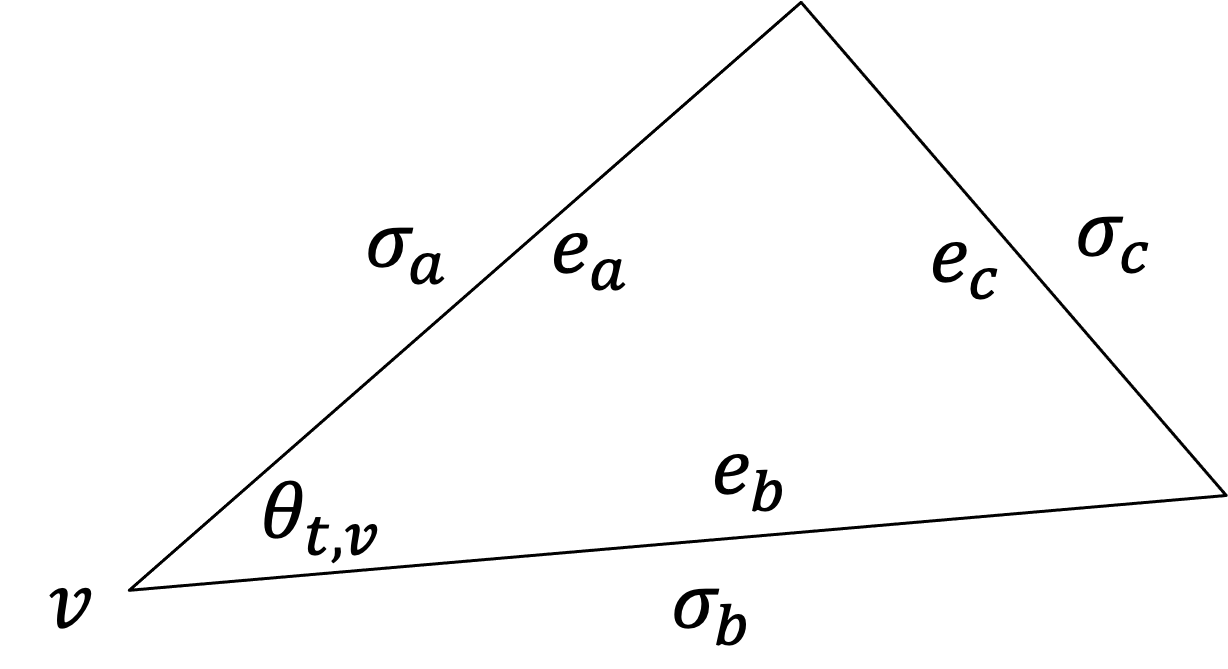}
    \caption{Triangle $t$ with edges $e_a, e_b, e_c$ whose squared lengths are $\sigma_a, \sigma_b, \sigma_c$. Edges $e_a$ and $e_b$ bound the angle $\theta_{t,v}$.}
    \label{fig:tri2}
\end{figure}

The terms $\delta_v^{(1,0)}$ and $\delta_v^{(0,1)}$ can be expressed in squared lengths using (\ref{eq:dabl2}) - (\ref{eq:abe3}) (with labels specified in \cref{fig:tri2}): 
\begin{align}
\partial_{e}\delta_v =& -\sum_{t\ni v} \partial_{e} \theta_{t,v} = -\sum_{t\ni v, e} \partial_{e} \theta_{t,v}.
\\\frac{\partial \theta _{t,v}}{\partial \sigma_{a}} = & \frac{\sigma _a-\sigma _b+\sigma _c}{4i \sigma _a (a\wedge b) }=\frac{\sigma _a-\sigma _b+\sigma _c}{8 \sigma _a V_t },
\\
\frac{\partial \theta _{t,v}}{\partial \sigma_{b}} = & \frac{-\sigma _a+\sigma _b+\sigma _c}{4i \sigma _b  (a\wedge b)}=\frac{-\sigma _a+\sigma _b+\sigma _c}{8 \sigma _b V_t},
\\
\frac{\partial \theta _{t,v}}{\partial \sigma_{v}} = & \frac{i}{2 a\wedge b} = \frac{-1}{4 V_t}.
\end{align}

For the second derivative,
\begin{align}
\delta_v^{(1,1)}=\partial_{e}\partial_{e'}\delta_v =& -\sum_{t\ni v, e, e'} \partial_{e}\partial_{e'} \theta_{t,v}.
\end{align}
For $e,e'$ ordered as $e_a,e_b,e_c$ (\cref{fig:tri2}), the Hessian matrix is
\begin{align}
\partial_{e}\partial_{e'}\theta_{t,v} = \frac{1}{32 V_t^3} \left(
\begin{array}{ccc}
\frac{  X}{4 \sigma _a^2  } & -\sigma _c & \frac{  \left(-\sigma _a+\sigma _b+\sigma _c\right)}{2} \\
 -\sigma _c & \frac{ Y}{4 \sigma _b^2  } & \frac{  \left(\sigma _a-\sigma _b+\sigma _c\right)}{2} \\
 \frac{  \left(-\sigma _a+\sigma _b+\sigma _c\right)}{2} & \frac{  \left(\sigma _a-\sigma _b+\sigma _c\right)}{2} & \frac{  \left(\sigma _a+\sigma _b-\sigma _c\right)}{2} \\
\end{array}
\right),
\end{align}
where 
\begin{align}
   X=&\sigma _a^3+\sigma _a^2 \left(\sigma _c-3 \sigma _b\right)+3 \sigma _a \left(\sigma _b^2-\sigma _c^2\right)-\left(\sigma _b-\sigma _c\right){}^3,
   \\
   Y=X(\sigma _a \leftrightarrow \sigma _b)=&\sigma _b^3+\sigma _b^2 \left(\sigma _c-3 \sigma _a\right)+3\sigma _b \left( \sigma _a^2- \sigma _c^2\right)-\left(\sigma _a-\sigma _c\right){}^3.
\end{align}
The above volume and angular terms of derivatives can be plugged into (\ref{eq:2dER2bee}) to express it in terms of squared lengths.

\subsection*{The measure term}

By the definition of $E_m$,
\begin{align}\label{eq:2dEmbee}
\partial_{e'}\partial_e E_{m}
=& m \sum_{t\ni e, e'}\partial_{e'}\partial_e \log \sV_t
\\=& m \sum_{t\ni e, e'} \frac{1}{\sV_t^2}(\sV_t \partial_{e'}\partial_e \sV_t- \partial_e \sV_t \partial_{e'} \sV_t).
\end{align}
The previous formulas (\ref{eq:2dv2be}) and (\ref{eq:2dv2bee}) can then be used to express this in terms of squared length.

\section{Numerical results}\label{sec:nr}

In this section we present results of numerical simulation for the path integral
\begin{align}
Z =& \int \mathcal{D}\sigma ~ e^E, \quad E = - \lambda V + a \sum_v \frac{\delta_v^2}{A_v} + m \sum_t \log \sV_t,
\end{align}
parameterized by $p=(\lambda, a, m)$. These constants and the squared lengths are set unitless in this section for simplicity.

We compute the expectation value for the squared length $\ev{\sigma_e}=\int \mathcal{D}\sigma ~ \sigma_e e^E$. According to (\ref{eq:expo}),
\begin{align}
\ev{\sigma_e}=&\frac{\ev{\sigma_e e^{i\varphi }}_{\Re E_{\text{eff}}}}{\ev{e^{i\varphi }}_{\Re E_{\text{eff}}}},
\end{align}
where $\ev{\cdot}_{\Re E_{\text{eff}}}$ is the average using the ensemble just generated, and the phase $\varphi$ is the imaginary part of $E_{\text{eff}}$. 

When $\varphi$ fluctuates wildly, the sign problem is bad. The task is to choose $T$ so that on the flowed contour the phase fluctuation is reduced. We can quantify the performance of the algorithm in alleviating the sign problem by the average phase
\begin{align}
\Phi=\abs{\ev{e^{i\varphi }}_{\Re E_{\text{eff}}}}=\abs{\frac{\int \mathcal{D}\sigma ~ e^{i\varphi+ \Re E_{\text{eff}}}}{\int \mathcal{D}\sigma ~ e^{\Re E_{\text{eff}}}}}.
\end{align}
The closer $\Phi$ is to $1$, the less the sign fluctuation, and hence the better the performance. 

In the cases considered below complex contours are found where $\Phi>0.9$.

\subsection{Numerical setup}

\begin{figure}
    \centering
    \includegraphics[width=.3\textwidth]{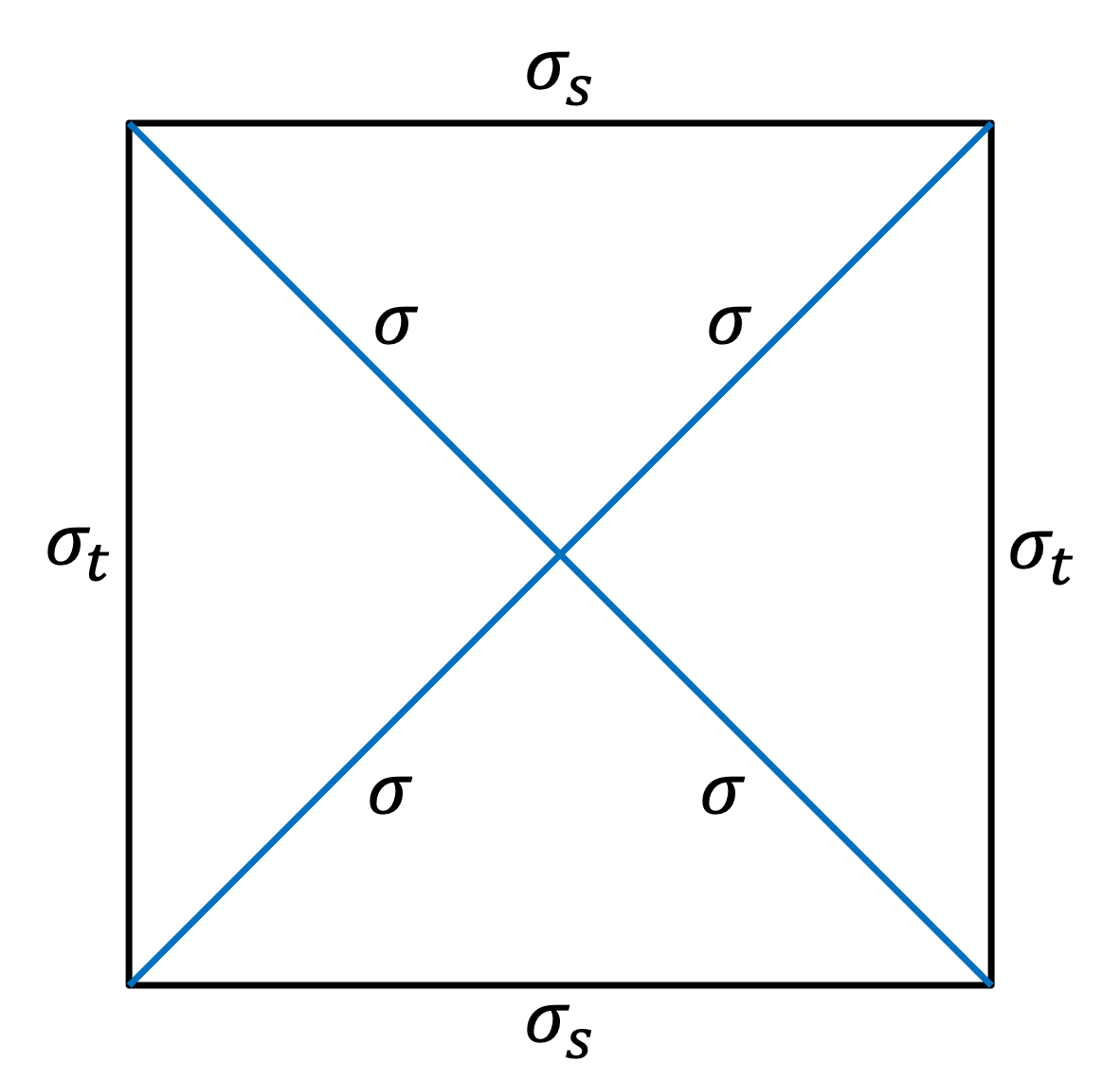}
    \caption{The symmetry-reduced box model with boundary squared lengths $\sigma_t, \sigma_s$ fixed, and interior squared length $\sigma$ dynamical.}
    \label{fig:sqg_box}
\end{figure}

The numerical simulation is performed on a simple box model in a symmetry-reduced setting (\cref{fig:sqg_box}). The boundary squared lengths are fixed at 
\begin{align}
    \sigma_s=1.0, \sigma_t=-1.0.
\end{align}
The four remaining edges are dynamical, and they take the same $\sigma$. In the definition of the deficit angle (\ref{eq:da2}) we take $Q=1$ for the interior vertex and $Q=4$ for the boundary vertices so that the deficit angle vanishes for a box with flat geometry. At the boundary vertices $A_v$ of (\ref{eq:av}) contains a sum of two triangle areas. In a different setting where the box has neighbor regions, the neighbor triangle areas would be included in the sum for $A_v$.

The numerical algorithm is as presented in \cref{sec:na}. For any fixed flow time $T$, we apply the adaptive Markov Chain Monte Carlo algorithm of \cite{Roberts2009ExamplesMCMC} to generate an ensemble of configurations according to the probability weight $e^{\Re E_{\text{eff}}}$. In each step we randomly pick an edge $e$, and propose a shift of $\sigma_e$ according to a Gaussian probability distribution. The variance of the distribution is dynamical in the adaptive MCMC algorithm employed here. In this algorithm, the acceptance rate is checked every $N$ ($N=50$ here) steps. If the acceptance rate is below or above the target rate $r=0.44$, the jump size is decreased or increased by 
\begin{align}
\delta(n)=\min(0.01,n^{-1/2}),
\end{align}
where $n$ is the step number. That $\delta(n)\rightarrow 0$ as $n\rightarrow \infty$ ensures the asymptotic convergence of the chain. 

A proposal is rejected if the Lorentzian triangle inequality is violated. In another model, one may also choose to reject a proposal if the number of light rays at a vertex is different from that of the flat configuration. However, in the symmetry-reduced box model the triangle inequality automatically implies the light ray number matching, so only the triangle inequality needs to be imposed. With this constraint, the dynamical edges can still be either timelike or spacelike. 

A lower bound $E_\text{min}=-10.0$ is imposed on $\Re E_{\text{eff}}$ in the numerical integration for the holomorphic flow from $t=0$ to the designated flow time $t=T$. If $\Re E_{\text{eff}}$ is too small the proposal will not be accepted. It improves the efficiency of the algorithm to simply truncate the integrator at the lower bound to move on to the next proposal.


\subsection{Results}

We consider five sets of coupling constants $p$. The numerical simulations are performed using the Julia programming language \cite{Bezanson2017Julia:Computing} on a personal computer. All Markov chains are obtained within about an hour. 
In all cases, we are able to identify a flow time $T$ for which the sign problem is significantly ameliorated so that $\Phi>0.9$.

\begin{figure}
    \centering
    \includegraphics[width=0.8\textwidth]{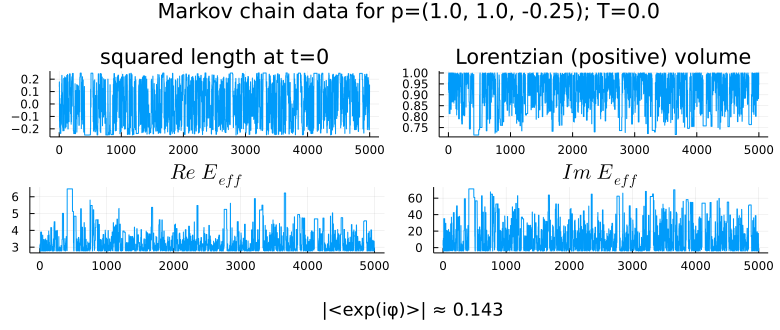}
    \caption{The starting case with $p=(1.0, 1.0, -0.25)$. With $T=0.0$ the phase fluctuation is quite large.}
    \label{fig:01-01}
\end{figure}

\begin{figure}
    \centering
    \includegraphics[width=0.8\textwidth]{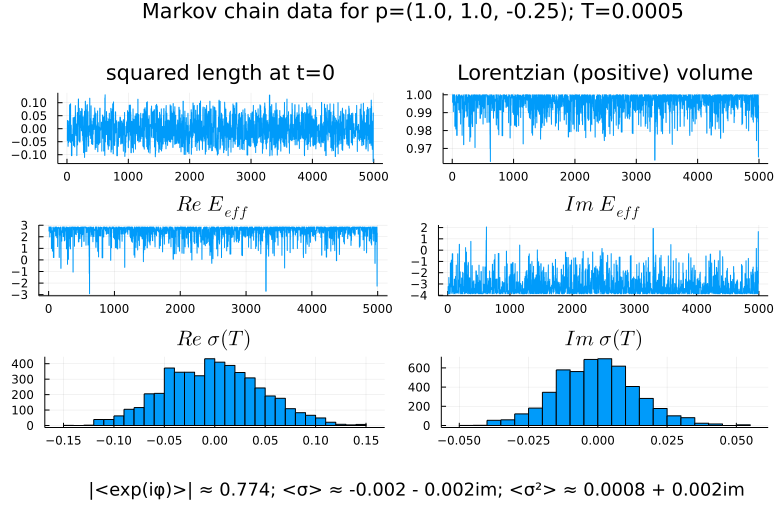}
    \caption{The starting case with $p=(1.0, 1.0, -0.25)$. With $T=0.0005$ the phase fluctuation is moderately suppressed.}
    \label{fig:01-02}
\end{figure}

\begin{figure}
    \centering
    \includegraphics[width=0.8\textwidth]{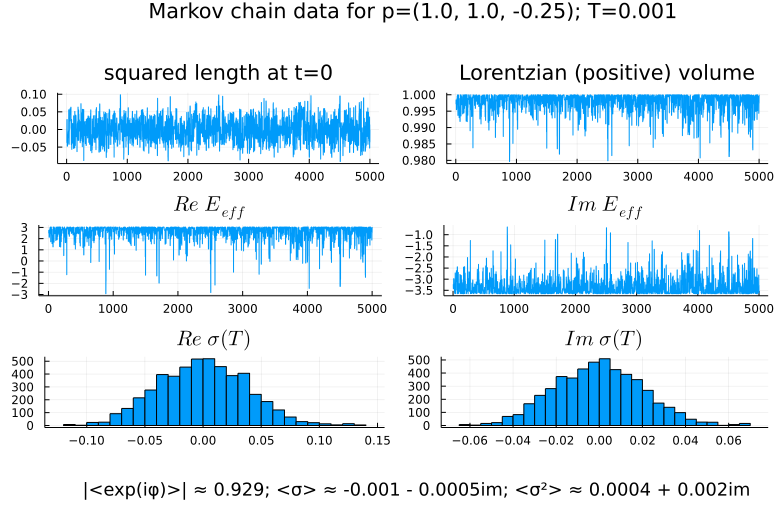}
    \caption{The starting case with $p=(1.0, 1.0, -0.25)$. With $T=0.001$ the phase fluctuation is moderately suppressed.}
    \label{fig:01-03}
\end{figure}

\subsection*{Starting case}

For $p=(1.0, 1.0, -0.25)$ where $m=-0.25$ for the DeWitt measure in $2D$ \cite{Hamber2009QuantumApproach}), we consider $T=0.0, T=0.0005$ and $T=0.001$ (\Cref{fig:01-01} to \Cref{fig:01-03}). As the flow time $T$ is increased from $0.0$ to $0.001$, the average phase $\Phi$ increases from about $0.143$ to $0.929$, which is close to $1$ and indicates that the phase fluctuation becomes much suppressed.

Note that $\ev{\sigma}\approx 0$, which is not a coincidence since the model admits a $\mathbb{Z}_2$ symmetry. One can check that the transformation $\sigma\mapsto-\sigma$ on the interior squared length preserves the path integral amplitude. Therefore for any $\sigma$ configuration there is the $-\sigma$ configuration with opposite contribution to $\ev{\sigma}$ to make $\ev{\sigma}=0$ as an exact result.

On the other hand, even though the numerical estimation of $\ev{\sigma^2}$ is close to zero, its value is not expected to vanish. That $\sigma^2$ is small is simply because it is the square of $\sigma$ which is close to zero. The third row in the figure with $T=0.001$ shows the histograms for the real and imaginary parts of $\sigma$ evaluated at the flow time $T$. The finite width of the distribution indicates the presence of fluctuations for the magnitude of $\sigma$. 

In the following, we will change the parameters one by one to see how this influences the fluctuations reflected in the histograms and the estimated values of $\ev{\sigma^2}$.

\begin{figure}
    \centering
    \includegraphics[width=0.8\textwidth]{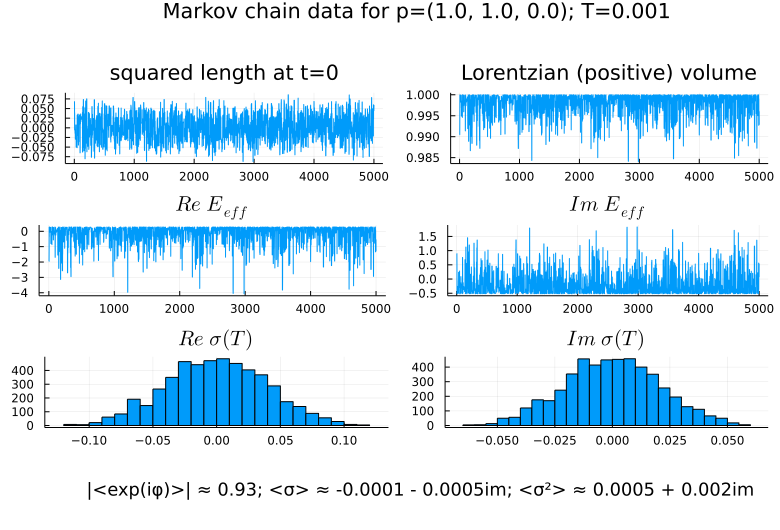}
    \caption{Increasing $m$ to $0.0$ does not influence the fluctuation in $\sigma$ much.}
    \label{fig:02-01}
\end{figure}

\subsection*{Changing m}

Given a new problem with a new set of parameters $p$, at present we do not know how to determine beforehand a suitable value of $T$ with small enough phase fluctuation. Therefore we simply find a suitable value of $T$ with $\Phi>0.9$ by trial and error. Here and below, we directly show the results for the suitable $T$.

The result for $m$ increased to $0.0$ is shown in \Cref{fig:02-01}. No significant difference is seen in the histogram or in $\ev{\sigma^2}$ in comparison to the original case of $m=-0.25$.

\begin{figure}
    \centering
    \includegraphics[width=0.8\textwidth]{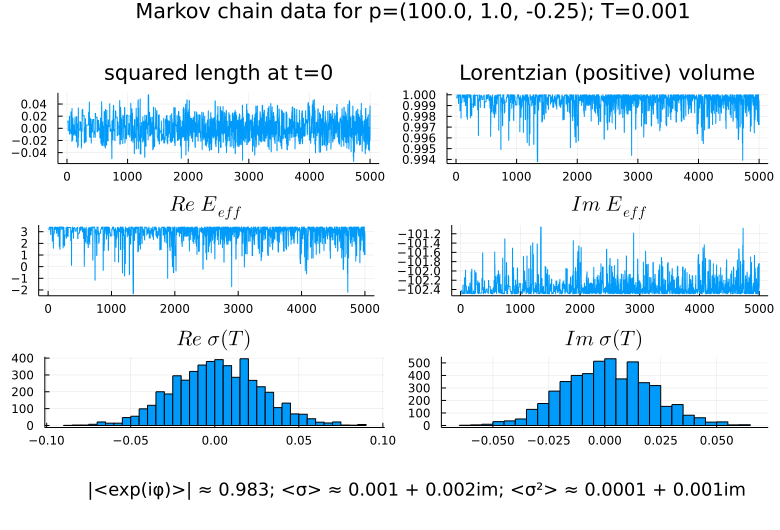}
    \caption{Changing $\lambda$ to $100.0$ slightly reduces the fluctuation in $\sigma$.}
    \label{fig:03-01}
\end{figure}

\begin{figure}
    \centering
    \includegraphics[width=0.8\textwidth]{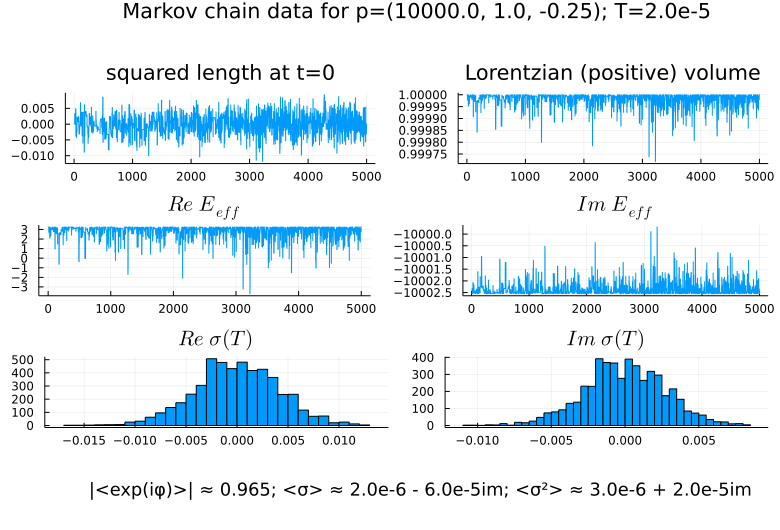}
    \caption{Changing $\lambda$ to $10000.0$ largely reduces the fluctuation in $\sigma$.}
    \label{fig:03-02}
\end{figure}

\subsection*{Changing $\lambda$}

The results for $\lambda$ changed to $100$ and $10000$ are shown in \Cref{fig:03-01} and \Cref{fig:03-02}. Although it may not be so apparent from just the cases of $\lambda=1$ and $\lambda=100$, including the case of $\lambda=10000$ makes it clear that the fluctuation in $\sigma$ in reduced, as indicated by the decreased width of the histogram distribution and the decreased magnitude of $\ev{\sigma^2}$.

\begin{figure}
    \centering
    \includegraphics[width=0.8\textwidth]{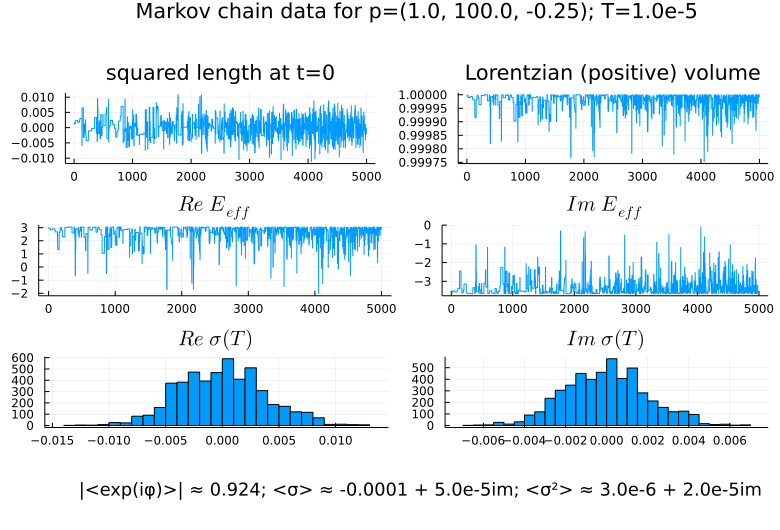}
    \caption{Changing $a$ to $100.0$ largely reduces the fluctuation in $\sigma$.}
    \label{fig:04-01}
\end{figure}

\subsection*{Changing $a$}

The results for $a$ changed to $100$ is shown in \Cref{fig:04-01}. In comparison to the cases of $a=1$, it is quite clear that increasing $a$ reduces the fluctuation in $\sigma$.

\subsection{Contour boundaries}\label{sec:cb}

As mentioned in \cref{sec:na}, to apply the holomorphic gradient flow algorithm we need that: 1) The holomorphic flow transverse a region where the path integrand is holomorphic; 2) The boundary of the flow region have negligible contribution to the path integral.

For simplicial quantum gravity, the boundaries are set by the branch point singularities of the path integrand, the generalized triangle inequalities, and additional constraints such as the light ray number constraint mentioned above. Within the region bounded by these boundaries, the path integrand is holomorphic, so the first requirement is met. 

We now check the second requirement that the boundary of the flow region make negligible contribution to the path integral. We noted above that for the symmetry-reduced box model, the generalized triangle inequalities imply the light ray number constraint. In addition, the boundaries of the generalized triangle inequalities are set where the Lorentzian volumes vanish, i.e., $\sV_t=0$. Yet this coincides with one of the square root branch points singularities (see (\ref{eq:2dvol}) and (\ref{eq:ca})). Therefore altogether we only need to consider the boundaries of the branch point singularities of the path integrand. 

Along such boundaries the contribution to the path integral is infinitely suppressed. To see this, note from (\ref{eq:DreEDt}) that $\dv{E_R}{t}=-\sum_e\abs{\partial_e E}^2\le 0$, i.e., the real part of the path exponent $E$ decays monotonically at a rate determined by $\abs{\partial_e E}$ along the holomorphic flow. Using the formulas of \cref{sec:2dsqg}, one can check that $\abs{\partial_e E}\rightarrow\infty$ at the branch point singularities. Therefore at the boundaries set by these branch points, the path integrand is infinitely exponentially suppressed. They make negligible contributions to the path integral. 

\section{Discussion}\label{sec:d}

We have provided a definition of complex simplicial gravity, which reduces to Euclidean and Lorentzian simplicial gravity in special cases. 

The complex formalism enabled us to perform Monte Carlo simulations for Lorentzian simplicial quantum gravity. The numerical sign problem is overcome by deforming the integration contour into the complex.

The complex formalism also sets the path for further studies of singularity resolving processes with complex semi-classical solutions, generalizing previous studies in the symmetry-reduced setting \cite{Hartle1989SimplicalModel, Louko1992ReggeCosmology, Birmingham1995LensCosmology, Birmingham1998ACalculus, Furihata1996No-boundaryUniverse, Silva1999SimplicialField, Silva1999AnisotropicField, Silva2000SimplicialPhi2, daSilvaWormholesMinisuperspace, Dittrich2022LorentzianSimplicial}, and making a clear connection to the Lorentzian theory.

The numerical simulations for Lorentzian simplicial quantum gravity performed here are in a very simple setting. They are on a simple box lattice, in $1+1D$, with symmetry reduction, and for pure gravity. Future works should extend to larger lattices, higher dimensions, without symmetry reduction, and with matter coupling. 

The physics theory side of these generalizations is understood. From the present work it is clear how to define complex simplicial quantum gravity on larger lattices in higher dimensions without symmetry reduction. From previous works it is clear how to couple to the matter species of the Standard Model (see e.g., Chapter 6 of Hamber's textbook \cite{Hamber2009QuantumApproach} and references therein). 

The numerics side of these generalizations still needs to be understood better. It is unclear to what extent the holomorphic gradient flow algorithm adopted here will remain efficient. Some other techniques may be needed, such as the tempered thimbles, the learnifolds, and the path optimization algorithms reviewed in \cite{AlexandruComplexProblem} and further developed in, e.g., \cite{Fukuma2021WorldvolumeMethod, Fukuma2021StatisticalAlgorithm, Lawrence2021NormalizingProblem, Wynen2021MachineProblems}. 

Using the numerical tools, one could study the refinement (continuum) limit of the theory. One could investigate questions about the fate of black hole and cosmological singularities (see the Introduction section for a list of references on this topic). From a path integral perspective, if a process can be characterized by a set of path integral configurations, the formalism assigns a probability to it (which may or may not have meaning to cognitive beings such as us). 
Simplicial quantum gravity provides a formalism to compute and compare the probabilities for such processes.

\section*{Acknowledgement}

I am very grateful to Bianca Dittrich and Jos{\'e} Padua-Arg{\"u}elles for comments and questions that helped improve the manuscript, to Seth Asante, Lee Smolin and Bianca Dittrich for valuable discussions on simplicial quantum gravity, to Erik Schnetter and Dustin Lang for timely help on computation matters, and to Lucien Hardy, Achim Kempf, Laurent Freidel, and Robert Mann for valuable discussions on quantum gravity in general.

Research at Perimeter Institute is supported in part by the Government of Canada through the Department of Innovation, Science and Economic Development Canada and by the Province of Ontario through the Ministry of Economic Development, Job Creation and Trade. 

\bibliographystyle{unsrt}
\bibliography{mendeley.bib}

\end{document}